\documentclass[11pt]{article}
\pdfoutput=1
\usepackage{amsmath,amsthm}
\usepackage{amssymb,latexsym}
\usepackage{graphicx}
\usepackage{mathpazo}
\usepackage{fullpage}
\usepackage{hyperref}
\usepackage{amsfonts}
\usepackage{enumerate}
\usepackage{ifpdf}
\usepackage{hyperref}


\def \RR {\mathbb R}

\def \EE {\mathbb E}

\DeclareMathOperator*{\PP}{\mathbb P}

\def \vphi {\varphi}

\def \cI {\mathcal I}

\def \cS {\mathcal S}

\def \cG {\mathcal G}
\def \cV {\mathcal V}

\def \d  {{\rm d}}

\def \poly {{\rm poly}}

\def \VSP {{\textnormal{\textsf{VSP}}}}

\newtheorem{theorem}{Theorem}[section]
\newtheorem{lemma}[theorem]{Lemma}

\newtheorem{corollary}[theorem]{Corollary}
\theoremstyle{definition}
\newtheorem{definition}[theorem]{Definition}

\newcommand{\myparagraph}[1]{\paragraph{#1.}}
\newcommand{\floor}[1]{{\lfloor #1 \rfloor}}

\newif\ifnotes\notesfalse

\ifnotes
    \newcommand{\bnote}[1]{{\bf (Bo'az:} {#1}{\bf ) }}
    \newcommand{\onote}[1]{{\bf (Oded:} {#1}{\bf ) }}
    \newcommand{\noteswarning}{{\begin{center} {\Large WARNING: NOTES ON}\end{center}}}
\else
    \newcommand{\bnote}[1]{}
    \newcommand{\onote}[1]{}
    \newcommand{\noteswarning}{{}}
\fi

\begin{document}

\title{\bf Quantum One-Way Communication is Exponentially Stronger Than Classical Communication}

\author{
 Bo'az Klartag\thanks{School of Mathematical Sciences, Tel Aviv University, Tel Aviv 69978, Israel. Supported in part by the Israel Science Foundation and by a Marie Curie Reintegration Grant from the
Commission of the European Communities.} \and
Oded Regev\thanks{Blavatnik School of Computer Science, Tel Aviv University, Tel Aviv 69978, Israel. Supported
   by the Israel Science Foundation,
   by the Wolfson Family Charitable Trust, and by a European Research Council (ERC) Starting Grant.
   Part of the work done while a DIGITEO visitor in LRI, Orsay.
}
}
\date{}

\maketitle

\begin{abstract}
In STOC 1999, Raz presented a (partial) function for which there is a quantum protocol
communicating only $O(\log n)$ qubits, but for which any classical (randomized, bounded-error) protocol requires
$\poly(n)$ bits of communication. That quantum protocol requires two
rounds of communication. Ever since Raz's paper it was open whether the same exponential
separation can be achieved with a quantum protocol that uses only one round of
communication. Here we settle this question in the
affirmative.
\end{abstract}

\noteswarning

\section{Introduction}

Communication complexity is one of the most basic models in computational complexity,
with wide-ranging applications in computer science~\cite{kushilevitznisan-book}.
The typical question asked in this model is the following. Two remote players, call them Alice and Bob,
are each given an input and are trying to compute some function of their inputs while using as little
communication as possible. How much communication is needed in order to compute the function?
The answer to this question often depends on what exactly we mean by ``compute using as little
communication as possible." One of the central models in this area is that of \emph{randomized (bounded-error)
communication}. Here we allow the players to toss coins, and require them to output the correct answer
with probability at least (say) $2/3$ on \emph{any} given input. This model is quite powerful and
corresponds quite well to what is actually achievable in real-world communication. For instance, one of the
most basic results in this area shows that the players can decide if their inputs are equal
using only $O(\log n)$ bits of communication, where $n$ is the size of their inputs in bits.
Another well-established model of communication is that of \emph{quantum communication}~\cite{Yao}.
Here, we allow the players to communicate quantum states, and to perform quantum operations on them.
Although not nearly as common as classical (i.e., non-quantum) communication, this model is able to provide important
insights into the power of quantum mechanics.

The focus of our work is on the relative power of these two central models,
a question whose study started in the late 1990s~\cite{BuhrmanCW98,AmbainisSTVW03}.
Most notably, Raz~\cite{Raz99} presented a (partial) function for which there is a quantum
protocol communicating only $O(\log n)$ qubits, but for which any classical (randomized, bounded-error) protocol
requires $\poly(n)$ bits of communication (i.e., $\Omega(n^c)$ for some constant $c>0$). This result
demonstrates that quantum communication is exponentially stronger than classical communication,
and is one of the most fundamental results in quantum communication complexity.

However, although Raz's function can be computed using only $O(\log n)$ qubits,
it seems to require at least two rounds of communication between Alice
and Bob. This naturally leads to the following fundamental question,
which has been open ever since Raz's paper:
can a similar exponential separation be achieved with a quantum protocol that uses \emph{only one round
of communication}? In other words:
\begin{quote}
\emph{Can quantum one-way communication be exponentially
stronger than classical two-way communication in computing a function}?
\end{quote}
Such a result might be the strongest possible separation between
quantum communication and classical communication.

There have been quite a few partial results in this direction. First, Bar-Yossef,
Jayram, and Kerenidis~\cite{BarYossefJK04} presented a \emph{relational problem} (i.e., one
in which there is possibly more than one correct answer for a given input) that has
a quantum one-way protocol using only $O(\log n)$ qubits of communication,
but for which any classical protocol \emph{using only one round of communication} must
communicate $\poly(n)$ bits. Classical two-way protocols, however, can easily solve their problem
using $O(\log n)$ bits. Their result was improved by Gavinsky, Kempe, Kerenidis,
Raz, and de Wolf~\cite{GavinskyKKRW07} who proved the same separation, namely,
$O(\log n)$ qubit protocol versus a $\poly(n)$ lower bound for any classical one-way protocol,
but in the standard setting of a \emph{functional problem}.
Again, classical two-way protocols can easily solve the problem using only $O(\log n)$ bits.
See also~\cite{Montanato10} for a similar separation.
Another closely related result is by Gavinsky~\cite{Gavinsky08}, who improved on Bar-Yossef et al.'s
\cite{BarYossefJK04} result in the other direction: namely, he showed an exponential separation between one-way quantum communication
and \emph{two-way classical communication} (just as in the open question) but for a \emph{relational problem}. Gavinsky's proof
is quite involved, and it is not clear if his techniques can be used
to attack the functional case.

It is important to note that there is a big difference between relational separations
and functional ones, with the latter often being more interesting, involving
deeper ideas, and having more profound implications.
Indeed, the functional separation in~\cite{GavinskyKKRW07} required the use of a
hypercontractive inequality and also provided a surprising counterexample
to a conjecture regarding extractors that are secure against quantum adversaries.
Moreover, the existence of a relational separation often says little about
the existence of a functional one; for instance, there are cases where relational separations
provably have no functional counterpart~\cite{GavinskyRW08}.

Here we settle the open question by exhibiting a (partial) function for
which there exists a quantum one-way communication protocol using only $O(\log n)$ qubits, but
for which any classical two-way communication protocol must communicate at least $\poly(n)$ bits. The function we
consider is actually the complete problem for one-way quantum communication~\cite{Kremer95}
and was also described in~\cite{Raz99}.
We call it the \emph{Vector in Subspace Problem} ($\VSP$).
In this problem, Alice is given an $n$-dimensional unit vector $u \in S^{n-1}$
and Bob is given a subspace $H \subset \RR^n$ of dimension $n/2$ with the promise that either
$u \in H$ or $u \in H^\perp$. Their goal is to decide which is the case. (For a formal definition
see Section~\ref{sec:comm}.) The quantum protocol for the problem is almost immediate
from the definition: Alice encodes the vector $u$ as a quantum state of $\lceil \log_2 n \rceil$ qubits (by definition,
the state of a quantum system with $k$ qubits is a $2^k$-dimensional unit vector) and sends it to Bob, who,
after having received the quantum state, performs the projective measurement given by $(H,H^\perp)$.
If $u \in H$, Bob is guaranteed to obtain the former outcome; if $u \in H^\perp$, Bob is guaranteed
to obtain the latter outcome.

It is easy to see that $\VSP$ has a classical protocol using $O(n \log n)$ bits: Alice
simply sends the vector $u$ to Bob, by specifying each coordinate to within an additive $\pm1/\poly(n)$ accuracy.
As noted by Raz~\cite{Raz99}, this protocol is not optimal, and the problem actually has an $O(\sqrt{n})$ protocol,
which we will describe in Section~\ref{sec:comm}.

But of course, our focus in this paper is on \emph{lower bounds}.
Our main result is an $\Omega(n^{1/3})$ lower bound on the (classical) communication
complexity of $\VSP$. Previously no lower bound better than logarithmic was known. Our proof involves some techniques that seem novel
in the computer science literature. We use a hypercontractive
inequality, applied in a fashion similar to that in Kahn,
Kalai, and Linial~\cite{KahnKL88} and in other more recent papers, including the result
by Gavinsky et al.\ mentioned above~\cite{GavinskyKKRW07} (see also~\cite{deWolfSurvey}).
However, unlike previous work, our hypercontractive inequality is in the setting of
functions defined on the sphere. We also use the Radon transform and some of its
basic properties, as well as a rather delicate martingale argument.
Finally, we feel that the proof, at least at a very high level,
is conceptually simpler than some of the previous proofs in this
line of work. We hope that our result and techniques will
find other applications.

One obvious open question left by our work
is to improve the lower bound to a tight $\Omega(n^{1/2})$;
we will mention one possible approach below.
Another open question is to strengthen our result
by showing a separation between the quantum simultaneous message passing (SMP) model
and the classical two-way model. This question was
recently answered by Gavinsky~\cite{Gavinsky09} for relational problems,
but the question for functions seems quite challenging, and it is not
even clear if such a separation can exist.
A final important open question is to understand the power of quantum communication
in computing \emph{total} functions; so far the best known separation is
polynomial.

\section{Proof Sketch}

\ifpdf
\begin{figure}
\begin{centering}
\includegraphics[width=0.3\textwidth]{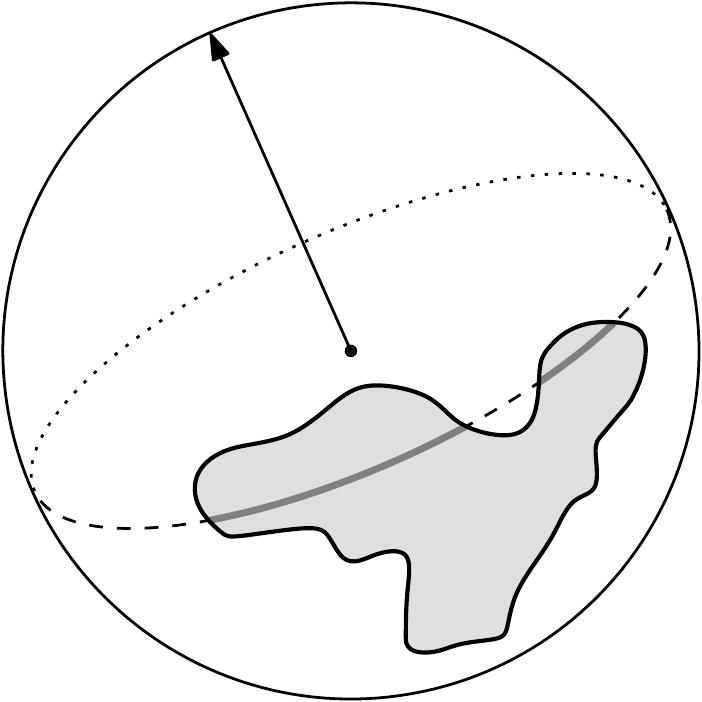}
\caption{A subset of $S^2$ and an equator}
\label{fig:equator}
\end{centering}
\end{figure}
\fi

Here we give an informal sketch of the main ideas in the proof of our lower bound, and
include some remarks regarding the tightness and other aspects of our proofs.
The proof starts in Section~\ref{sec:comm} with a more or less standard application of the rectangle bound
which we do not describe here. This shows that in order to prove our communication lower bound,
it suffices to prove the following sampling statement, which is our main technical theorem (see Figure~\ref{fig:equator} for an illustration).
The formal statement will appear as Theorem~\ref{thm:mainsamplinglowdim}.

\begin{theorem}[Informal]\label{thm:informalsampling}
Let $A$ be an arbitrary (measurable) subset of the sphere $S^{n-1}$ whose measure $\sigma(A)$ (under the uniform probability measure on $S^{n-1}$) is at least $\exp(-n^{1/3})$.
Assume we choose a uniformly random subspace $H \subset \RR^n$ of dimension $n/2$.
Consider the measure of the set $A \cap H$ under the uniform probability measure on the unit sphere $H \cap S^{n-1}$
of the $n/2$-dimensional subspace $H$. Then this measure is within a factor of (say) $1 \pm 0.1$
of $\sigma(A)$ except with probability at most $\exp(-n^{1/3})$.
\end{theorem}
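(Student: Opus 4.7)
The plan is to analyze the Radon-type slicing operator $R : L^2(S^{n-1}) \to L^2(G(n, n/2))$ defined by $Rf(H) = \int_{S(H)} f\, d\mu_H$, where $\mu_H$ is the uniform probability measure on $S(H) = H \cap S^{n-1}$. Setting $f = \mathbf{1}_A$, one has $Rf(H) = \mu_H(A \cap H)$ and $\EE_H Rf(H) = \sigma(A) =: p$, so the goal becomes the multiplicative concentration $(1 - 0.1) p \leq Rf(H) \leq (1 + 0.1) p$ with failure probability at most $\exp(-n^{1/3})$. Since we want multiplicative concentration with $p$ possibly as small as $\exp(-n^{1/3})$, crude additive bounds will not suffice and one must exploit the special structure of $R$.

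The first step is spectral: expand $f = \sum_{d \geq 0} f_d$ in spherical harmonics on $S^{n-1}$. By $O(n)$-invariance $R$ intertwines the natural $O(n)$-representations on its two sides, hence by Schur's lemma acts as a scalar $\lambda_d$ on each degree-$d$ isotypic component. The constants $\lambda_d$ can be written explicitly via Gegenbauer/Jacobi polynomials using the inner-product distribution of two uniform vectors on a random $n/2$-dimensional slice; what matters is that $\lambda_d$ decays quickly past some cutoff $D$ that will be roughly $n^{2/3}$. In particular the variance $\|Rf - p\|_{L^2(G)}^2 = \sum_{d \geq 1} \lambda_d^2 \|f_d\|_{L^2(S)}^2$ is controlled, and the contribution of the high-degree tail $f_{>D}$ is negligible after averaging.

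The second step is a spherical hypercontractive inequality of the form $\|f_d\|_{L^q(S^{n-1})} \leq \rho(q)^d \|f_d\|_{L^2}$, which upgrades $L^2$ information on each low-degree level to a high-moment estimate. Combined with the eigenvalue decay of $\lambda_d$ on high levels, this yields a bound on $\EE_H |Rf(H) - p|^{2q}$ that can be optimized in the parameters $D$ and $q$. Finally, to reach the tail probability $\exp(-n^{1/3})$, I would amplify the moment bound by the martingale argument alluded to in the introduction: reveal $H$ by drawing an orthonormal sequence $v_1, v_2, \dots, v_{n/2}$ one vector at a time (e.g.\ Gram--Schmidt on i.i.d.\ Gaussians), form the Doob martingale $M_j = \EE[Rf(H) \mid v_1, \dots, v_j]$, estimate each increment via the hypercontractive/spectral bounds above, and conclude by an Azuma-type inequality. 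The main obstacle will be this last step: because the target is a multiplicative bound of order $p = \exp(-n^{1/3})$, the increments must be bounded with an additional factor that effectively sees $p$, and striking the right balance between the degree cutoff $D$, the hypercontractive exponent $q$, and the martingale increment variance is precisely the \emph{delicate} part of the argument.
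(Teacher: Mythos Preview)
Your ingredients (spherical-harmonic decomposition, Schur, hypercontractivity, a martingale) are exactly the paper's, but the architecture is off in two places, and the difference is what separates $\exp(-n^{1/3})$ from $\exp(-n^{1/4})$. First, the paper does \emph{not} analyze the slicing operator $R:L^2(S^{n-1})\to L^2(\cG_{n,n/2})$ directly; it instead proves the codimension-one case (Theorem~\ref{thm:mainequator}), where the Radon transform is an honest operator $L^2(S^{n-1})\to L^2(S^{n-1})$, diagonal on the $\cS_k$'s with eigenvalues $|\mu_k|\le (Ck/n)^{k/2}$. Combining this with the hypercontractive bound of Lemma~\ref{lem:KKL} gives the bilinear estimate of Theorem~\ref{thm:mainsymmetric}, which after a duality step becomes the tail bound $\PP(|\sigma_{\theta^\perp}(A)/\sigma(A)-1|\ge t)\le C\exp(-cnt/\log(2/\sigma(A)))$ for a single random hyperplane $\theta^\perp$. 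The paper explicitly remarks that running the spectral/hyper\-contractive argument directly on the Grassmannian $\cG_{n,n/2}$, as you propose in your first two steps, currently achieves only $\exp(-n^{1/4})$; so even if your claim about the decay of $\lambda_d$ past $D\approx n^{2/3}$ is correct, the resulting moment bound on $\EE_H|Rf(H)-p|^{2q}$ will not by itself reach $\exp(-n^{1/3})$.

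Second, the paper's martingale is \emph{multiplicative} on ratios, not your additive Doob martingale $M_j=\EE[Rf(H)\mid v_1,\ldots,v_j]$. One builds a nested chain $\RR^n=H_0\supset H_1\supset\cdots\supset H_{n/2}=H$ with $\dim H_k=n-k$ (so one reveals vectors \emph{orthogonal} to $H$, not in $H$) and tracks $X_k=\sigma_{H_k}(A\cap H_k)/\sigma_{H_{k-1}}(A\cap H_{k-1})$. Conditionally each $X_k$ has mean exactly $1$, and the codimension-one theorem applied inside $H_{k-1}$ gives $\PP(|X_k-1|\ge t\mid H_1,\ldots,H_{k-1})\le C\exp(-cn^{2/3}t)$ on the event that $\sigma_{H_{k-1}}(A\cap H_{k-1})\ge\sigma(A)/2$; a Bernstein-type inequality for products (Lemma~\ref{lem:martingaleupper} and Corollary~\ref{cor:martingalelower}), together with a truncation to handle the event that this lower bound fails, then controls $\prod_k X_k=\sigma_H(A\cap H)/\sigma(A)$. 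Your additive Azuma scheme would instead need increments of order $0.1\,p/\sqrt{n}$ with $p$ possibly as small as $\exp(-n^{1/3})$, and nothing in your spectral step delivers bounds at that scale. The whole point of passing to the ratios $X_k$ is that they concentrate near $1$ independently of how small $\sigma(A)$ is, which is exactly the ``multiplicative'' feature you correctly flag as the crux but do not actually build into your martingale.
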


Before we proceed to discuss the proof of this theorem, we make two remarks.
First, it is interesting to note that this theorem is a considerable strengthening of
Lemma 4.1 in~\cite{Raz99}, which is a similar sampling statement, but one that
applies only to sets $A$ whose measure is constant (or slightly less).
Raz proves that lemma using an elementary (but clever) use of Chernoff's
concentration bound. See also the paper by Milman and Wagner~\cite{MilmanW03} for a further
discussion and applications of Raz's sampling lemma.

The second remark is that our theorem is tight in the sense that there exists a set $A$
of measure $\exp(-n^{1/3})$ such that the probability of the measure of $A \cap H$ deviating
by more than $10 \%$ is essentially $\exp(-n^{1/3})$. This set $A$ is simply
a spherical cap, and the bad $H$'s are those that are close to the center of the cap. We omit
this standard calculation. One implication of this is that improving our $\Omega(n^{1/3})$
lower bound to a tight $\Omega(n^{1/2})$ is probably impossible using the rectangle bound,
and one might have to use instead the \emph{smooth rectangle bound} introduced
in~\cite{klauck10,jaink10} and used recently in~\cite{ChakrabartiR10}.
For the interested reader, we note that the following reasonable sampling statement would imply the tight $\Omega(n^{1/2})$ bound.
Let $A$ be an arbitrary subset
of the sphere $S^{n-1}$ whose measure $\sigma(A)$ is at least $\exp(-n^{1/2})$,
and assume we choose a uniformly random subspace $H \subset \RR^n$ of dimension $n/2$.
We now consider the measure of the set $A \cap H$ and that of the set $A \cap H^\perp$
(under the appropriate uniform probability measures). Then the goal would be to prove that
the \emph{average} of these two measures is at least $0.9\,\sigma(A)$ except with probability at most $\exp(-n^{1/2})$.

Theorem~\ref{thm:informalsampling} is proven by a recursive application of the following core sampling statement
for $(n-1)$-dimensional subspaces. Roughly speaking, it shows that sampling a set of measure at least $\exp(-n^{1/3})$
using a random $(n-1)$-dimensional subspace gives an error that is typically at most $1 \pm n^{-2/3}$ and
has an exponential decay.
The formal statement will appear as Theorem~\ref{thm:mainequator}.

\begin{theorem}[Informal]\label{thm:informalsamplingequator}
Let $A \subset S^{n-1}$ be of measure at least $\exp(-n^{1/3})$. Assume we choose a uniformly random subspace $H \subset \RR^n$
of dimension $n-1$. Then, for any $0<t<1$, the measure of $A \cap H$ (under the uniform measure on $H \cap S^{n-1}$)
is within a factor of $1 \pm t$ of $\sigma(A)$ except with probability at most $\exp(-n^{2/3}t)$.
\end{theorem}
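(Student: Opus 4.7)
The plan is to parameterize the random $(n-1)$-dimensional subspace $H$ by its unit normal $\theta \in S^{n-1}$, so that with $f = \mathbf{1}_A$ and $\mu = \sigma(A)$, the quantity of interest is the spherical (Funk) Radon transform
\[
(Rf)(\theta) := \EE_{\omega \in \theta^\perp \cap S^{n-1}} f(\omega).
\]
Fubini gives $\EE_\theta (Rf)(\theta) = \mu$, so the theorem reduces to showing sharp concentration of $Rf$ around $\mu$; the overall strategy is to control the $L^{2p}$ moments of $Rf - \mu$ and apply Markov's inequality.

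The first step is to expand $f = \sum_{k \geq 0} f_k$ into spherical harmonics and diagonalize, $R f_k = \lambda_k f_k$. The classical explicit formula for the Funk transform forces $\lambda_0 = 1$, $\lambda_k = 0$ for odd $k$, and a short Stirling estimate gives $|\lambda_{2m}| \leq (Cm/n)^m$ for $m \leq c n$. Thus $Rf - \mu = \sum_{m \geq 1} \lambda_{2m} f_{2m}$, and I need to bound the $L^{2p}$ norm of this sum.

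For the moment bound I would invoke two hypercontractive inputs on $S^{n-1}$. First, the dimension-free estimate $\|f_{2m}\|_{2p} \leq (2p-1)^m \|f_{2m}\|_2$ for any degree-$2m$ spherical harmonic. Second, the associated level-$k$ inequality for indicators, obtained by optimizing the spherical hypercontractive bound $\|T_\rho f\|_2^2 \leq \|f\|_{1+\rho^2}^2$ over $\rho$ in the style of Kahn--Kalai--Linial,
\[
\|f_k\|_2 \leq \mu\,(C\log(1/\mu)/k)^{k/2}.
\]
Plugging these into a Minkowski bound and using $\log(1/\mu) \leq n^{1/3}$,
\[
\|Rf - \mu\|_{2p} \leq \sum_{m\geq 1} |\lambda_{2m}|\,(2p-1)^m\,\|f_{2m}\|_2 \leq \mu \sum_{m\geq 1}\left(\frac{C'p}{n^{2/3}}\right)^m,
\]
which for $p \leq c\,n^{2/3}$ is a geometric series dominated by its $m=1$ term, yielding $\|Rf - \mu\|_{2p} \leq C''\mu\,p/n^{2/3}$.

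Markov's inequality with $p = \Theta(n^{2/3} t)$ then gives
\[
\PP[|Rf(\theta) - \mu| > t\mu] \leq \left(\frac{C'' p}{t\,n^{2/3}}\right)^{2p} \leq e^{-c\,n^{2/3} t},
\]
exactly as claimed. The main obstacle I anticipate is establishing the spherical level-$k$ inequality with dimension-independent constants and aligning it tightly with the Funk eigenvalue decay, so that the geometric series in the moment bound indeed collapses to its first term throughout the range $p \leq c\,n^{2/3}$; the ``delicate martingale argument'' alluded to in the introduction most plausibly enters at this point, either to prove the level-$k$ estimate on $S^{n-1}$ (which is more subtle than on the Boolean cube) or to sharpen the tail beyond what the pure moment method would give.
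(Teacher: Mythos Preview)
Your approach is correct and uses the same three ingredients as the paper --- the spherical harmonic decomposition, the Funk/Radon eigenvalues $|\lambda_{2m}|\le (Cm/n)^m$, and the KKL-style level-$k$ bound coming from spherical hypercontractivity --- but it assembles them differently. You bound $\|Rf-\mu\|_{2p}$ directly by Minkowski plus the eigenfunction norm-growth $\|f_{2m}\|_{2p}\lesssim (2p-1)^{m}\|f_{2m}\|_2$, and then apply Markov. The paper instead argues bilinearly: it lets $E$ be the set of bad directions, sets $g=1_E/\sigma(E)$, and bounds $\int f\,Rg - 1=\sum_{m\ge 1}\mu_{2m}\langle f_{2m},g_{2m}\rangle$ by Cauchy--Schwarz level-by-level, applying the KKL lemma symmetrically to \emph{both} $f$ and $g$; rearranging $t\le C\log(2/\sigma(A))\log(2/\sigma(E))/n$ then gives the tail bound on $\sigma(E)$. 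Your moment parameter $p$ plays exactly the role of $\log(1/\sigma(E))$ in the paper's argument, so the two are dual. The paper's packaging has one practical advantage: the high-degree tail $m\ge \delta n$ is dispatched by a single Cauchy--Schwarz over $m$, giving $|\mu_{2\lfloor\delta n\rfloor}|\,\|f\|_2\|g\|_2$, whereas your Minkowski sum needs a bit more care there (and note the spherical exponent is $\lambda_{2m}/(2n-2)=m(1+(2m-1)/(n-1))$, not exactly $m$, which matters in the tail).

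One correction: the ``delicate martingale argument'' in the introduction is \emph{not} used for this equator theorem at all. It is used for the other direction --- iterating the codimension-one statement down to subspaces of dimension $n/2$ --- by writing a random $n/2$-dimensional subspace as a nested chain $\RR^n=H_0\supset H_1\supset\cdots\supset H_{n/2}$ and controlling the product of ratios $\sigma_{H_k}(A\cap H_k)/\sigma_{H_{k-1}}(A\cap H_{k-1})$ via a Bernstein-type martingale inequality. The spherical level-$k$ inequality you need here is proved directly from the log-Sobolev/hypercontractive inequality on $S^{n-1}$, exactly as in the Boolean case; no martingale is involved.
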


Section~\ref{sec:maintechnical} will be dedicated to deriving Theorem~\ref{thm:informalsampling}
from the above theorem. This is done using a martingale argument and Bernstein-type
inequalities; in the following we just give the rough idea. Consider the following equivalent
way to choose a uniformly random subspace $H$ of dimension $n/2$. First, let $H_0=\RR^n$. Then, choose a uniformly random subspace
$H_1 \subset H_0 = \RR^n$ of dimension $n-1$; then, choose a uniformly random subspace $H_2$ of $H_1$
of dimension $n-2$; continue in the same fashion until $H=H_{n/2}$ which is a uniformly random
$n/2$-dimensional subspace of $H_{n/2-1}$. We now consider the sequence of measures
of $A \cap H_i$ (with respect to the uniform measure in $S^{n-1} \cap H_{i}$) for $i=0,\ldots,n/2$.
By definition, this sequence starts with $\sigma(A)$. According to Theorem~\ref{thm:informalsamplingequator}, at each step of the sequence we typically get an extra multiplicative error of
$1 \pm n^{-2/3}$. After $n/2$ steps, the accumulated error becomes $1 \pm \sqrt{n} \cdot n^{-2/3} = 1 \pm n^{-1/6}$
(this of course requires a proof since, e.g., the steps are not independent).
Hence, assuming the error has a Gaussian tail (which is also far from obvious),
and recalling that the probability that a Gaussian variable deviates by more than $t$
standard deviations is roughly $\exp(-t^2)$, we obtain that the
probability of seeing a total deviation of more than $1\pm 0.1$ is at most $\exp(-n^{1/3})$, as required.

We remark that we also have an alternative and direct proof of Theorem~\ref{thm:informalsampling} that is
similar in nature to the proof of Theorem~\ref{thm:informalsamplingequator} (as described below), except it uses
the Grassmannian manifold; this proof, unfortunately, currently leads to a worse bound of
$\exp(-n^{1/4})$ (instead of $\exp(-n^{1/3})$) and is therefore omitted. It is quite possible
that this direct proof can be improved to obtain the tight $\exp(-n^{1/3})$ bound.

The proof of Theorem~\ref{thm:informalsamplingequator} will be
given in Section~\ref{sec:samplingequator}. It uses
the hypercontractive inequality for the sphere, applied in a fashion
similar to the one done by Kahn, Kalai, and Linial~\cite{KahnKL88},
as well as some basic properties of the Radon transform.
In order to demonstrate these ideas in a setting that might be more familiar to some readers,
we spend the remainder of this section on proving an analogous statement in the setting of the Boolean hypercube
$\{0,1\}^n$, and for simplicity just consider the case $t=n^{-1/3}$ (the general case is similar).

\myparagraph{Sampling statement for the Boolean cube}
Let $n$ be an even integer. For a vector $y \in \{0,1\}^n$ define $y^\perp = \{ z \in \{0,1\}^n ; {\rm HamDist}(y,z)=n/2 \}$
as the ``equator orthogonal to $y$".
Let $A \subseteq \{0,1\}^n$
be of measure $\mu(A):=|A|/2^n$ at least $\exp(-n^{1/3})$. Assume we choose a uniform $y \in \{0,1\}^n$,
and consider the fraction of points in $y^\perp$ that are contained in $A$. Then our goal
is to show that this fraction is in $(1 \pm n^{-1/3}) \mu(A)$ except with probability
at most $\exp(-n^{1/3})$.

As stated, this statement is actually \emph{false} due to a parity issue; this can be seen, e.g.,
by taking $A$ to be all points of even Hamming weight, a set of measure $1/2$. Then the fraction
of points in $y^\perp$ that are contained in $A$ is either $0$ or $1$ depending on the parity of
$y$.
Although the statement can be easily mended, in
the sequel we ignore this issue and proceed with an incomplete proof of the original incorrect statement.
We allow ourselves to do this because this parity issue does not arise in
the setting of the sphere, and the argument below becomes a valid proof there (with the necessary modifications, of course).

The above sampling statement can be stated in the following essentially equivalent way.
For any $A,B \subseteq \{0,1\}^n$ of measure at least $\exp(-n^{1/3})$,
\begin{align}
 \PP_{y \sim B, x \sim y^\perp}[x \in A] \in (1 \pm n^{-1/3}) \mu(A), \label{eq:hypercubestatement}
\end{align}
where the notation $x \sim E$ means that $x$ is distributed uniformly in the set $E$,
and the right hand side indicates the interval $[(1 -n^{-1/3})\mu(A), (1 + n^{-1/3})\mu(A)]$.
For a function $f:\{0,1\}^n \to \RR$, define its \emph{Radon transform} $R(f):\{0,1\}^n \to \RR$ as
$$ R(f)(y) := \EE_{x \sim y^\perp}[f(x)].$$
Define $f = 1_A / \mu(A)$ and $g = 1_B/\mu(B)$ to be the indicator functions of $A$ and $B$
normalized so that their expectations over a uniform input are $\EE_x[f(x)] = \EE_x[g(x)] = 1$. With this notation, Eq.~\eqref{eq:hypercubestatement}
becomes
\begin{align}
 \langle f , R(g) \rangle = \EE_x[f(x)R(g)(x)] \in 1 \pm n^{-1/3}.\label{eq:hypercubestatement2}
\end{align}
For a function $f:\{0,1\}^n \to \RR$, define its Fourier transform $\hat{f}:\{0,1\}^n \to \RR$
by $\hat{f}(w) := \EE_x[(-1)^{w\cdot x} f(x)]$. Then by the orthogonality of the Fourier
transform, Eq.~\eqref{eq:hypercubestatement2} can be written equivalently as
$$ \sum_w \hat{f}(w) \widehat{R(g)}(w) \in 1 \pm n^{-1/3}.$$

An easy direct calculation reveals that $R$ is diagonal in the Fourier basis.
(Alternatively, one can use Schur's lemma and the fact that $R$ commutes with translations.)
This calculation also reveals that the eigenvalue corresponding to $w \in \{0,1\}^n$ is $0$ whenever the Hamming
weight of $w$ is odd, $1$ when the Hamming weight of $w$ is $0$,
$$\frac{\binom{n-2}{n/2} - 2 \binom{n-2}{n/2-1} + \binom{n-2}{n/2-2}}{\binom{n}{n/2}}\approx -\frac{1}{n}$$
when $w$ is of Hamming weight $2$, approximately $\frac{1}{n^2}$ when $w$ is of Hamming weight $4$, etc.
We can therefore write
$$ \sum_w \hat{f}(w) \widehat{R(g)}(w) \approx
   \hat{f}(0)\hat{g}(0) - \frac{1}{n} \sum_{|w|=2} \hat{f}(w)\hat{g}(w) + \frac{1}{n^2} \sum_{|w|=4} \hat{f}(w)\hat{g}(w) - \cdots.$$
The first term is $\hat{f}(0)\hat{g}(0) = \EE_x[f(x)] \EE_x[g(x)] = 1$. Hence our goal is to bound the remaining terms by
$n^{-1/3}$. For simplicity, let us focus on the first term, and show that $\sum_{|w|=2} \hat{f}(w)\hat{g}(w)$
is at most $n^{2/3}$ in absolute value; one can similarly analyze the remaining terms and show that
their total contribution is similar.\footnote{This is where we
are cheating: the term $|w| = n$ can contribute a lot to this sum.} By using the Cauchy-Schwarz
inequality, we can bound this sum by $(\sum_{|w|=2} \hat{f}(w)^2)^{1/2} (\sum_{|w|=2} \hat{g}(w)^2)^{1/2}$.
The following lemma now completes the proof.

\begin{lemma}
Let $A \subseteq \{0,1\}^n$ be of measure $\mu$, and let $f = 1_A/\mu(A)$ be its (normalized) indicator function.
Then, for some universal constant $C>0$,
$$ \sum_{|w|=2} \hat{f}(w)^2 \le C (\log(1/\mu))^2.$$
\end{lemma}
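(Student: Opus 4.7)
My plan is to derive this from the Bonami--Beckner hypercontractive inequality on the Boolean cube, following exactly the Kahn--Kalai--Linial paradigm. Recall that for $1 < p \le 2$ and $\rho = \sqrt{p-1}$, the noise operator $T_\rho$, defined in the Fourier basis by $\widehat{T_\rho h}(w) = \rho^{|w|} \hat{h}(w)$, satisfies $\|T_\rho h\|_2 \le \|h\|_p$ for every $h:\{0,1\}^n \to \RR$.

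I apply this to $h = f = 1_A/\mu$. On the one hand, by Parseval applied to $T_\rho f$,
\[
 \|T_\rho f\|_2^2 \;=\; \sum_w \rho^{2|w|} \hat{f}(w)^2 \;\ge\; \rho^4 \sum_{|w|=2} \hat{f}(w)^2.
\]
On the other hand, since $f$ takes only the values $0$ and $1/\mu$ on a set of measure $\mu$,
\[
 \|f\|_p^p \;=\; \EE_x[f(x)^p] \;=\; \mu \cdot \mu^{-p} \;=\; \mu^{1-p},
\]
so $\|f\|_p^2 = \mu^{2(1-p)/p}$. Combining the two bounds,
\[
 \sum_{|w|=2} \hat{f}(w)^2 \;\le\; (p-1)^{-2}\, \mu^{2(1-p)/p}.
\]

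It remains to choose $p$ to minimize the right-hand side. Writing $p = 1 + \eps$ for small $\eps > 0$, the bound becomes $\eps^{-2} \exp\bigl(\tfrac{2\eps}{1+\eps} \log(1/\mu)\bigr)$. Taking $\eps = 1/\log(1/\mu)$ (valid whenever $\mu \le 1/e$, which is the only nontrivial regime) yields $\eps^{-2} = (\log(1/\mu))^2$ and $\exp\bigl(\tfrac{2\eps}{1+\eps}\log(1/\mu)\bigr) \le e^2$, so
\[
 \sum_{|w|=2} \hat{f}(w)^2 \;\le\; e^2 \,(\log(1/\mu))^2,
\]
proving the claim with $C = e^2$ (the case $\mu > 1/e$ is trivial since $\sum_w \hat{f}(w)^2 = \EE[f^2] = 1/\mu$ is then bounded by a constant).

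There is no real obstacle here: the whole argument is a single application of hypercontractivity followed by a one-parameter optimization, exactly mirroring the KKL estimate for the influences of low-measure sets. The only thing to double-check is that the same scheme carries over cleanly to the sphere in Section~\ref{sec:samplingequator}, where one should use the spherical hypercontractive inequality in place of Bonami--Beckner and spherical harmonics of a fixed degree in place of Fourier characters of a fixed Hamming weight; the optimization in $p$ is identical.
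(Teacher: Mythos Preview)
Your proof is correct and essentially identical to the paper's: both apply the Bonami--Beckner inequality $\|T_{\sqrt{p-1}} f\|_2 \le \|f\|_p$, lower-bound $\|T_{\sqrt{p-1}} f\|_2^2$ by $(p-1)^2 \sum_{|w|=2}\hat f(w)^2$, compute $\|f\|_p^2 = \mu^{-2(1-1/p)}$, and optimize over $p$ (the paper simply says ``optimizing over $1 \le p \le 2$'' where you make the choice $p = 1 + 1/\log(1/\mu)$ explicit). Your closing remark about the spherical analogue is also exactly what the paper does later.
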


Equivalently, the lemma says that if $X=(x_1,\ldots,x_n)$ is uniformly chosen from $A$,
then the sum over all pairs $\{i,j\}$ of the bias squared of $x_i \oplus x_j$ is at most $C(\log(1/\mu))^2$.
This can be seen to be essentially tight by taking, e.g., $A=\{x \in \{0,1\}^n; x_1=\cdots=x_{\log_2{1/\mu}}=0\}$.
This lemma is proven by applying the Bonami-Gross-Beckner hypercontractive inequality \cite{Bonami70, Gross75, Beckner75} in a way similar to that in~\cite{KahnKL88}.
Essentially the exact same lemma appears in~\cite{GavinskyKKRW07}, and is also described in
detail in the survey~\cite{deWolfSurvey}. We include a sketch of the proof, as later on we will have a similar
proof in the spherical setting (in Lemma~\ref{lem:KKL}).

\begin{proof}
The hypercontractive inequality for the Boolean cube states that for any $f: \{0,1\}^n \to \RR$,
and $1 \le p \le 2$,
$$ \big\| T_{\sqrt{p-1}} f \big\|_2 \le \|f \|_p $$
where $T_\rho$ is the noise operator with parameter $\rho$ (which is the operator
that is diagonal in the Fourier basis, and has eigenvalue $\rho^k$ for each Fourier basis function
of level $k$), and the $p$th norm is defined as $\|f\|_p = \EE_x[|f(x)|^p]^{1/p}$.
By plugging in our $f$ we obtain
\begin{align*}
\sum_{|w|=2} \hat{f}(w)^2 &\le \frac{1}{(p-1)^2} \sum_{w} (p-1)^{|w|} \hat{f}(w)^2 \\
& =  \frac{1}{(p-1)^2} \big\|T_{\sqrt{p-1}} f \big\|_2^2 \\
& \le \frac{1}{(p-1)^2} \|f\|_p^2 = \frac{1}{(p-1)^2} \mu^{-2(1-1/p)}.
\end{align*}
The lemma follows by optimizing over $1 \le p \le 2$.
\end{proof}

\section{Preliminaries}\label{sec:prelim}

\myparagraph{General}
Throughout the paper, by ``measurable" we mean Borel measurable. All logarithms are natural logarithms
unless otherwise specified.
We adopt the following convention for
denoting constants.
The letters $c, \tilde{c},
C, \tilde{C}$, etc.\ stand for various positive universal
constants, whose value may change from one line to the next. We
usually use upper-case $C$ to denote universal constants that we
think of as ``sufficiently large'', and lower-case $c$ to denote
universal constants that are ``sufficiently small''.

\myparagraph{Some manifolds and uniform distributions on them}
Write $S^{n-1} = \{ x \in \RR^n ; |x| = 1 \}$ for the unit sphere
in $\RR^n$. We denote by $\sigma$ the
uniform probability measure on $S^{n-1}$, i.e., the unique
rotationally-invariant probability measure on $S^{n-1}$ (see, e.g.,~\cite[Chapter I]{MilmanS86} for more
information on Haar measures).
We denote by $\cG_{n,m}$ the Grassmannian manifold, i.e., the manifold of all $m$-dimensional
subspaces in $\RR^n$, and we let $\sigma_{\cG}$ be the uniform distribution over it
(or, more formally, the unique rotationally-invariant probability measure on $\cG_{n,m}$).
We also consider the incidence manifold
$$ \cI_{n,m} = \left \{ (x, H) \in S^{n-1} \times \cG_{n,n-m} \, ; \, x  \in H \right \} \subset S^{n-1} \times \cG_{n,n-m}, $$
and let $\sigma_\cI$ be the uniform probability measure on it (or more precisely,
the unique rotationally-invariant probability measure on $\cI_{n,m}$).
We will implicitly use some basic properties of these manifolds and the uniform distributions on them;
for a rigorous discussion of the topic, see, e.g., Helgason~\cite[Chapter II]{Helgason99}.

\section{Communication Complexity}\label{sec:comm}

In this section we give a formal definition of the $\VSP$ problem, and derive the
main lower bound from the sampling statement. Our discussion in this section closely follows Raz's~\cite{Raz99}, hence we will occasionally
allow ourselves to be brief. We also assume some basic familiarity with randomized communication
complexity~\cite{kushilevitznisan-book}.

We start with the formal definition of $\VSP$. This is identical to the ${\cal P}_0$ problem defined
in~\cite{Raz99}.

\begin{definition}
Let $0 \le \vartheta < 1/\sqrt{2}$ be a parameter. In the $\VSP_\vartheta$ problem, Alice is given
an $n$-dimensional unit vector $u \in S^{n-1}$ and Bob is given a subspace $H \subset \RR^n$ of
dimension $n/2$. They are promised that either the distance of $u$ from $H$ is at most $\vartheta$
or the distance of $u$ from $H^\perp$ is at most $\vartheta$. Their goal is to decide which is the case.
\end{definition}

This problem was first defined by Kremer~\cite{Kremer95} and was shown to be a complete
problem for one-round quantum communication complexity. In particular, for any $0 \le \vartheta < 1/\sqrt{2}$,
$\VSP_\vartheta$ has an (almost immediate) quantum protocol communicating only $O(\log n)$
qubits in a single message from Alice to Bob. (Moreover, there is a matching $\Omega(\log n)$
lower bound.)

In terms of its classical (randomized, bounded-error) communication complexity, Raz~\cite{Raz99}
has shown that the problem has an $O(\sqrt{n})$ communication protocol, which we now
briefly describe. Assume Alice and Bob use their shared randomness
to pick a sequence of unit vectors chosen uniformly from $S^{n-1}$, $v_1,v_2,\ldots$. Alice
looks for the vector $v_i$ with the maximal inner product $v_i \cdot u$ among the first $2^{C \sqrt{n}}$
unit vectors, and sends the index
$i$ to Bob, who decides on the output based on which of $H$ and $H^\perp$ is closer to $v_i$.
The protocol clearly requires only $O(\sqrt{n})$ bits
of communication, and moreover, the output produced by Bob is correct with high probability
(essentially since the projection squared of $v_i$ on $H$ (or $H^\perp$) gets an addition of $n^{-1/2}$
due to the high inner product with $u$, which is sufficient to noticeably affect Bob's answer since the standard
deviation of the projection squared is of order $n^{-1/2}$). Using Newman's theorem,
the shared randomness can be replaced with private randomness by only communicating
an extra $O(\log n)$ bits (which is negligible).
For a more detailed proof, see Theorem 3.8 in~\cite{Raz99}.

However, no lower bound better than logarithmic was previously known.
Our main result is an $\Omega(n^{1/3})$ lower bound on the randomized communication complexity of the problem $\VSP_0$ (which is the problem described in the
introduction). One minor caveat here is that this lower bound holds only for protocols that are ``measurable,"
in the sense that the functions describing the behavior of the players need to be measurable.
Clearly, increasing $\vartheta$ can only make the problem harder, hence
our lower bound also apply to any $0 < \vartheta < 1/\sqrt{2}$. Moreover,
as we shall see below, there is no need to assume measurability in the case
$\vartheta > 0$.

Another point to note is that the number of possible inputs to
$\VSP$ is infinite. Although there is nothing terribly wrong with this, in the
standard communication complexity model problems are supposed to have inputs that are taken from a finite set.
This can be easily achieved by specifying the inputs using
an $n$-dimensional vector (for Alice) together with an $n \times n/2$ matrix (for Bob) each of whose entries
is described by $O(\log n)$ bits. We denote this problem by $\widetilde{\VSP}$.
Since this is a restriction of $\VSP$, we clearly still have a one-way $O(\log n)$ qubit protocol.
Next, notice that for any $0 < \vartheta < 1/\sqrt{2}$, we can convert any protocol for $\widetilde{\VSP}_\vartheta$
into a protocol for $\VSP_0$ by simply rounding the coordinates
of the inputs. Moreover, the resulting $\VSP_0$ protocol is clearly measurable since its
input space is partitioned into a finite number of simple sets, and the protocol's
behavior is completely determined on each of these simple sets.
We therefore obtain a lower bound of $\Omega(n^{1/3})$ on the randomized communication complexity of $\widetilde{\VSP}_\vartheta$
for any $0 < \vartheta < 1/\sqrt{2}$. Notice that the problem's input size is $m=O(n^2 \log n)$,
and hence in terms of the input size, our lower bound is $\Omega((m/\log m)^{1/6})$.
Finally, since $\widetilde{\VSP}_\vartheta$ is a restriction of $\VSP_\vartheta$,
we also obtain a lower bound of $\Omega(n^{1/3})$ on the randomized communication complexity of $\VSP_\vartheta$
for any $0 < \vartheta < 1/\sqrt{2}$, without the measurability assumption.
We summarize this discussion in the following theorem, which we then proceed
to prove.

\begin{theorem}
Any measurable randomized (bounded-error) protocol for $\VSP_0$ requires $\Omega(n^{1/3})$
bits of communication. As a result, we obtain that for all $0 < \vartheta < 1/\sqrt{2}$,
the randomized communication complexity of both $\VSP_\vartheta$ and $\widetilde{\VSP}_\vartheta$
is $\Omega(n^{1/3})$ (without any measurability assumption).
\end{theorem}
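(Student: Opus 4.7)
The plan is to reduce to a standard corruption/rectangle bound in which Theorem~\ref{thm:informalsampling} plays the role of saying that every ``large'' combinatorial rectangle is nearly balanced between YES and NO inputs. First, I would set up the hard distribution $\mu=\tfrac{1}{2}\mu_{\mathrm{YES}}+\tfrac{1}{2}\mu_{\mathrm{NO}}$ on $S^{n-1}\times \cG_{n,n/2}$: draw $H\sim \sigma_\cG$, and conditionally on $H$ draw $u$ uniformly either from $H\cap S^{n-1}$ (YES case) or from $H^{\perp}\cap S^{n-1}$ (NO case), each with probability $1/2$. A rotational-invariance argument gives that the $u$-marginal of both $\mu_{\mathrm{YES}}$ and $\mu_{\mathrm{NO}}$ is $\sigma$, and similarly the $H$-marginal of each is $\sigma_\cG$. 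After standard error-reduction and fixing the private randomness to its $\mu$-best value, I obtain a deterministic measurable protocol of the same communication $c$ whose transcript partitions the input space into at most $2^c$ measurable rectangles $A_i\times B_i$ with YES/NO labels and total error at most a small constant $\delta$ under $\mu$.

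Next, I would split the rectangles into \emph{small} ones (those with $\sigma(A_i)<e^{-n^{1/3}}$ or $\sigma_\cG(B_i)<e^{-n^{1/3}}$) and \emph{large} ones. The marginal observations above immediately bound the total $\mu$-mass of small rectangles by $O(2^{c}\,e^{-n^{1/3}})$, which is $o(1)$ once $c\le c_0 n^{1/3}$ for a sufficiently small absolute constant $c_0>0$. For a large rectangle $A_i\times B_i$, I apply Theorem~\ref{thm:informalsampling} to $A_i$ to conclude that $\sigma_H(A_i\cap H)\in (1\pm 0.1)\sigma(A_i)$ outside a $\sigma_\cG$-negligible set of $H$'s, and the identical bound holds for $\sigma_{H^{\perp}}(A_i\cap H^{\perp})$ because the involution $H\mapsto H^{\perp}$ preserves $\sigma_\cG$ on $\cG_{n,n/2}$. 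Integrating over $H\in B_i$ (after absorbing the bad $H$'s into the overall $o(1)$ error budget) then yields $\mu_{\mathrm{YES}}(A_i\times B_i)=(1\pm 0.1)\,\mu_{\mathrm{NO}}(A_i\times B_i)$ for every large rectangle.

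Given this balance, the minority side of every large rectangle carries at least a constant (roughly $0.45$) fraction of the rectangle's own $\mu$-mass, so regardless of how the protocol labels the rectangles the error contributed by the large rectangles alone is a positive constant minus $o(1)$, which exceeds $\delta$ once $\delta$ was chosen small enough at the outset. This contradiction forces $c\ge c_0 n^{1/3}$, which is the first (measurable $\VSP_0$) statement. The second statement, the $\Omega(n^{1/3})$ bound for $\VSP_\vartheta$ and $\widetilde{\VSP}_\vartheta$ for every $0<\vartheta<1/\sqrt{2}$ \emph{without} any measurability assumption, is then immediate from the rounding reductions spelled out in the paragraph preceding the theorem statement. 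The genuinely delicate point of the plan is the requirement that Theorem~\ref{thm:informalsampling} control $\sigma_H(A\cap H)$ and $\sigma_{H^{\perp}}(A\cap H^{\perp})$ \emph{simultaneously}; this only works because the subspace dimension was chosen to be exactly $n/2$, which makes $H$ and $H^{\perp}$ identically distributed under $\sigma_\cG$.
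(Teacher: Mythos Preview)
Your plan is correct and is essentially the paper's proof. The paper packages the per-rectangle estimate as Lemma~\ref{lem:main2}, a one-sided bound $\sigma_{\cI}((A\times B)\cap\cI_{n,n/2})\ge 0.8\,\sigma(A)\sigma_\cG(B)$ for large $A,B$, extends it to all rectangles by subtracting $C\exp(-cn^{1/3})$, and sums these lower bounds against the product-measure partition $\sum_i \sigma(A_i)\sigma_\cG(B_i)=1$; you instead phrase the same estimate as a two-sided YES/NO balance and split off small rectangles explicitly, but the content is identical. One small remark: your invocation of ``error reduction'' is unnecessary---with the sampling deviation $0.1$ the minority mass in each large rectangle is already well above $1/3$, so the argument goes through directly from error $1/3$, exactly as in the paper.
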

\begin{proof}
As described above, it suffices to prove the lower bound on $\VSP_0$.
Fix an arbitrary randomized protocol communicating at most $D$ bits,
and assume that it solves $\VSP_0$ with error probability at most $1/3$ on all legal inputs.
(The argument applies to any error probability smaller than $1/2$ by a standard amplification technique.)
Our goal is to lower bound $D$.

Recall the definition of $\cI_{n,n/2}$ and the uniform distribution $\sigma_{\cI}$ on it,
given by a uniformly chosen subspace $H$ of dimension $n/2$ and a uniformly chosen unit vector $u$ in $H$.
We also define the set $\bar{\cI}_{n,n/2}$ as the set of all pairs
$(x, H) \in S^{n-1} \times \cG_{n,n/2}$ such that $x \in H^\perp$,
and let $\bar{\sigma}_{\cI}$ be the uniform distribution on it,
given by a uniformly chosen subspace $H$ of dimension $n/2$ and a uniformly chosen unit vector $u$ in $H^\perp$.

We consider the following two quantities. The first is the probability that
the protocol incorrectly outputs ``$u$ not in $H$" when the inputs are chosen from $\sigma_{\cI}$.
The second is the probability that the protocol incorrectly outputs ``$u$ in $H$" when the inputs
are chosen from $\bar{\sigma}_{\cI}$.
By our assumption, each of these quantities is at most $1/3$, and hence their sum is at most $2/3$.
By linearity there exists a way to fix the random string used by the protocol such that the
resulting deterministic protocol also satisfies that the sum of these two quantities is at most $2/3$.
{}From now on we consider that deterministic protocol.

As is well known, such a deterministic protocol induces a partition of
$S^{n-1} \times \cG_{n,n/2}$ into $2^D$ rectangles, i.e., measurable sets of the form $A \times B$ where
$A \subseteq S^{n-1}$ and $B \subseteq \cG_{n,n/2}$, where each rectangle is labelled
with ``in" or ``not in", corresponding to the protocol's output on inputs from
this rectangle. In order to analyze this partition, we use the following
lemma, which follows easily from our main sampling theorem, as will be shown in
Section~\ref{sec:maintechnical}.

\begin{lemma}\label{lem:main2}
Suppose that $A \subseteq S^{n-1}$ and $B \subseteq \cG_{n,n/2}$
are measurable sets with
$$ \sigma(A) \geq C \exp(-c n^{1/3}), \qquad \sigma_{\cG}(B) \geq C \exp(-c n^{1/3}) $$
for some universal constants $c,C>0$. Then,
$$ \sigma_{\cI} \left( (A \times B) \cap \cI_{n,n/2} \right) \geq 0.8\, \sigma(A) \sigma_{\cG}(B).$$
\end{lemma}

As a result, we obtain that for \emph{all} measurable sets $A \subseteq S^{n-1}$ and $B \subseteq \cG_{n,n/2}$,
\begin{align}\label{eq:commfinal1}
 \sigma_{\cI} \left( (A \times B) \cap \cI_{n,n/2} \right) \geq 0.8 \, \sigma(A) \sigma_{\cG}(B) - C \exp(-c n^{1/3}).
\end{align}
By simply replacing $H$ with $H^\perp$ we also obtain that
\begin{align}\label{eq:commfinal2}
 \bar{\sigma}_{\cI} \left( (A \times B) \cap \bar{\cI}_{n,n/2} \right) \geq 0.8 \, \sigma(A) \sigma_{\cG}(B) - C \exp(-c n^{1/3}).
\end{align}
We now sum the inequalities~\eqref{eq:commfinal1} over all rectangles $A\times B$ that are labelled with
``not in" and the inequalities~\eqref{eq:commfinal2} over all rectangles labelled with
``in". Our assumption above says precisely that the left hand side is at most $2/3$.
The right hand side is exactly $0.8 - 2^D \cdot C \exp(-c n^{1/3})$. Rearranging, we obtain
that $2^D \ge c \exp(c n^{1/3})$, as required.
\end{proof}

\section{Sampling Sets by Equators}\label{sec:samplingequator}

In this section we prove one of the main components of our proof, namely, a sampling theorem
using equators: we show that any (not too small) subset $A$ of the sphere $S^{n-1}$
is sampled well by a randomly chosen equator (where an equator is the intersection of $S^{n-1}$
with an $(n-1)$-dimensional subspace). See Figure~\ref{fig:equator}.

\begin{theorem}\label{thm:mainequator}
Let $A \subseteq S^{n-1}$ be a measurable set. Assume $H$ is a uniformly chosen $(n-1)$-dimensional subspace. Then, for any
$0<t<1$, the probability that
$$ \left| \frac{\sigma_H(A \cap H)}{\sigma(A)} - 1 \right| \ge t $$
is at most $C \exp(-c n t / \log (2/\sigma(A)))$ for some universal constants $C,c>0$,
where $\sigma_H$ denotes the uniform probability measure on the sphere $H \cap S^{n-1}$.
\end{theorem}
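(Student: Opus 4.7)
The plan is to reformulate the statement via the spherical Radon (Funk) transform and bound high moments of the deviation. Set $f := 1_A / \sigma(A)$ and
\[
(Rf)(y) \;:=\; \EE_{x \in y^\perp \cap S^{n-1}} f(x),
\]
so that drawing $y \sim \sigma$ and taking $H := y^\perp$ is equivalent to drawing $H$ uniformly from $\cG_{n,n-1}$, and then $\sigma_H(A \cap H)/\sigma(A) = Rf(y)$. The theorem therefore reduces to the pointwise tail bound
\[
\PP_{y \sim \sigma}\bigl[\,|Rf(y) - 1| \ge t\,\bigr] \;\le\; C\exp(-cnt/L), \qquad L := \log(2/\sigma(A)),
\]
which I will establish by controlling $\|Rf - 1\|_p$ for an appropriate $p$ and then applying Markov's inequality at an optimized $p$.

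Decompose $f = \sum_{k \ge 0} f_k$ into spherical harmonics $f_k \in \mathcal{H}_k$. Since $R$ commutes with rotations, Schur's lemma forces it to act as a scalar $\lambda_k$ on each $\mathcal{H}_k$, and the Funk--Hecke formula (or a direct computation) gives $\lambda_0 = 1$, $\lambda_k = 0$ for odd $k$ (by antipodal symmetry of the equator), and for even $k \ge 2$
\[
|\lambda_k| \;=\; \prod_{i=0}^{k/2-1} \frac{2i+1}{n+2i-1} \;\le\; \left(\frac{Ck}{n}\right)^{k/2}.
\]
Hence $Rf - 1 = \sum_{k \ge 2,\,k\text{ even}} \lambda_k f_k$, and the triangle inequality in $L^p$ combined with Beckner's spherical hypercontractive inequality $\|f_k\|_p \le (p-1)^{k/2}\|f_k\|_2$ (valid for every $p \ge 2$ and $f_k \in \mathcal{H}_k$) gives
\[
\|Rf - 1\|_p \;\le\; \sum_{k \ge 2,\,k\text{ even}} |\lambda_k|\,(p-1)^{k/2}\,\|f_k\|_2.
\]

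The remaining input is a spherical analog of the KKL-type lemma from Section~2. Applying the hypercontractive inequality directly to $f$ gives $\rho^{2k}\|f_k\|_2^2 \le \|T_\rho f\|_2^2 \le \|f\|_{1+\rho^2}^2 = \sigma(A)^{-2\rho^2/(1+\rho^2)}$; optimizing at $\rho^2 \sim k/L$ yields $\|f_k\|_2 \le (CL/k)^{k/2}$ whenever $k \le L$. Substituting, each term with $2 \le k \le L$ is at most $(C'pL/n)^{k/2}$, and for $p \le cn/L$ these terms form a geometric series dominated by its $k=2$ contribution, giving $\|Rf - 1\|_p \le C''\,pL/n$. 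The high-degree tail $k > L$, where the KKL bound degenerates to the trivial $\|f_k\|_2 \le \sigma(A)^{-1/2}$, is absorbed by the super-geometric decay of $|\lambda_k|$ and contributes no more than the $k=2$ term.

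Markov's inequality now yields
\[
\PP\bigl[\,|Rf - 1| \ge t\,\bigr] \;\le\; \bigl(C''pL/(tn)\bigr)^{p},
\]
and choosing $p := tn/(eC''L)$, which satisfies the admissibility constraint $p \le cn/L$ since $t < 1$, produces the required bound $C\exp(-cnt/L)$. The main obstacle is the rigorous handling of the large-degree harmonics: the hypercontractive bound on $\|f_k\|_2$ saturates near $k \approx L$, so one must verify quantitatively that the Funk--Hecke eigenvalue decay $|\lambda_k| \le (Ck/n)^{k/2}$ dominates uniformly throughout the relevant range of $p$. Beyond this, the computation parallels the Boolean sketch of Section~2, with Beckner's spherical hypercontractive inequality and spherical harmonics playing the roles of Bonami--Gross--Beckner and the Walsh basis.
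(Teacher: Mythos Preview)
Your toolkit matches the paper's: the spherical Radon transform diagonalized by spherical harmonics, the Funk--Hecke eigenvalue bound $|\lambda_k|\le (Ck/n)^{k/2}$, and the hypercontractive/KKL-type control of $\|f_k\|_2$. The structural difference is that you bound $\|Rf-1\|_p$ directly and finish with Markov at $p\approx tn/L$, whereas the paper proves a \emph{bilinear} estimate (Theorem~\ref{thm:mainsymmetric}): it sets $g=1_E/\sigma(E)$ with $E$ the bad set, shows $|\langle f,Rg\rangle-1|\le C\log(2\|f\|_\infty)\log(2\|g\|_\infty)/n$, and reads off $\sigma(E)\le\exp(-cnt/L)$. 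These routes are morally dual, and yours would be a nice alternative---but the tail step, which you yourself flag as the main obstacle, does not go through as written.

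The issue is the interaction of the hypercontractive blow-up with the eigenvalue decay at the value of $p$ you need. For $L<k\lesssim n$ your term-by-term bound reads
\[
|\lambda_k|\,(p-1)^{k/2}\,\|f_k\|_2 \;\le\; \bigl(Ck(p-1)/n\bigr)^{k/2}\sigma(A)^{-1/2}\;\approx\;(C't\,k/L)^{k/2}\,e^{L/2},
\]
using $p-1\approx tn/(eC''L)$ and the trivial $\|f_k\|_2\le\|f\|_2$. Once $k$ passes roughly $L/(C't)$ the base exceeds $1$ and each term is already at least $e^{L/2}$; the sum diverges rather than being dominated by the $k=2$ contribution. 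For $k\gg n$ things are worse: $|\lambda_k|$ decays only polynomially, like $k^{-(n-2)/2}$, while $(p-1)^{k/2}$ grows exponentially (and the correct hypercontractive exponent on the sphere is $\lambda_k/(2n-2)\approx k^2/(2n)$ there, not $k/2$, making the blow-up sharper still). So the ``super-geometric decay of $|\lambda_k|$'' does \emph{not} absorb the hypercontractive factor at the required $p$. The paper's bilinear route avoids this entirely: staying in $L^2$, the tail $k\ge\delta n$ is handled by one Cauchy--Schwarz,
\[
\sum_{k\ge\delta n}|\mu_k|\,\|f_k\|_2\,\|g_k\|_2 \;\le\; |\mu_{\delta n}|\,\|f\|_2\,\|g\|_2 \;\le\; e^{-cn}\,e^{(L_f+L_g)/2},
\]
which is tiny because $L_fL_g\le cn$ forces $L_f+L_g\ll n$. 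No $(p-1)^{k/2}$ ever appears. To rescue the $L^p$ route you would need a genuinely different treatment of the high-degree piece of $Rf$---for instance an $L^2$--$L^\infty$ interpolation exploiting $\|Rf\|_\infty\le\|f\|_\infty$---rather than the level-by-level triangle inequality.
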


In the rest of this section, we actually prove the following more symmetric statement, from which
Theorem~\ref{thm:mainequator} follows as described below. Here we denote by $\cV_n$ the manifold of all
pairs of orthogonal vectors,
$$ \cV_{n} = \left \{ (x, y) \in S^{n-1} \times S^{n-1} \, ; \, x \cdot y = 0 \right \}
$$
and we let $\sigma_\cV$ denote the uniform probability measure over $\cV_n$.

\begin{theorem}\label{thm:mainsymmetric}
Suppose $f, g: S^{n-1} \rightarrow [0, \infty)$ are bounded measurable functions
with $\int_{S^{n-1}} f \d \sigma = \int_{S^{n-1}} g \d \sigma = 1$ and set
 $$ s = \log (2 \| f \|_{\infty})  \cdot \log (2 \| g \|_{\infty} ).$$
 Then, when $s \leq c n$,
$$
 \left| \int_{\cV_n} f(x) g(y) \d \sigma_\cV(x,y)  - 1 \right| \leq \frac{C s}{n},
$$
where $C,c  > 0$ are universal constants.
\end{theorem}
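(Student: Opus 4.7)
The plan is to follow the Boolean sketch from Section~2 essentially verbatim, replacing Fourier characters on $\{0,1\}^n$ by the decomposition of $L^2(S^{n-1})$ into spaces of spherical harmonics. First I would introduce the Funk (equatorial Radon) transform
$$ R(g)(x) \;=\; \int_{S^{n-1}\cap x^\perp} g(y)\, \d\sigma_{x^\perp}(y) $$
and observe that the conditional distribution of $y$ given $x$ on $\cV_n$ is exactly the uniform measure on the equator $S^{n-1}\cap x^\perp$, so that
$$ \int_{\cV_n} f(x)g(y)\, \d\sigma_{\cV}(x,y) \;=\; \int_{S^{n-1}} f(x)\,R(g)(x)\, \d\sigma(x) \;=\; \sum_{k\ge 0} \lambda_k\, \langle f_k, g_k\rangle, $$
where $f=\sum_k f_k$ and $g=\sum_k g_k$ are the orthogonal decompositions into spherical harmonics of degree $k$, and $\lambda_k$ is the scalar by which $R$ acts on the irreducible subspace $\mathcal{H}_k$ (its existence follows from the rotation invariance of $R$ together with Schur's lemma).

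Next I would compute these eigenvalues via the Funk--Hecke formula. The outcome is $\lambda_0=1$, $\lambda_k=0$ for every odd $k$ (the antipodal symmetry $y\mapsto-y$ of an equator kills odd harmonics), and, for $k=2j$,
$$ |\lambda_{2j}| \;=\; \prod_{i=1}^{j} \frac{2i-1}{n+2i-3} \;\le\; \Bigl(\tfrac{2j}{n}\Bigr)^{\!j}, $$
with the trivial bound $|\lambda_{2j}|\le 1$ always available. Since $\int f=\int g=1$ forces $f_0=g_0=1$, the $j=0$ term contributes exactly $1$, and it remains to bound
$$ \Bigl| \sum_{j\ge 1} \lambda_{2j}\, \langle f_{2j}, g_{2j}\rangle\Bigr| \;\le\; \sum_{j\ge 1} |\lambda_{2j}|\, \|f_{2j}\|_2\, \|g_{2j}\|_2 $$
by Cauchy--Schwarz applied level by level.

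The main ingredient is then the spherical analogue of the KKL-type estimate displayed at the end of Section~2: for any $h:S^{n-1}\to[0,\infty)$ with $\int h\, \d\sigma=1$,
$$ \|h_{2j}\|_2^2 \;\le\; C\,(\log \|h\|_\infty)^{2j} \qquad \text{for every } j\ge 1. $$
This is to be obtained from Bonami--Gross--Beckner-style hypercontractivity on the sphere --- the inequality $\|T_\rho h\|_2\le\|h\|_p$, with $T_\rho$ multiplying $\mathcal{H}_k$ by $\rho^k$ and $\rho=\sqrt{p-1}$ --- optimized by choosing $p-1\asymp 1/\log\|h\|_\infty$, exactly as in the proof of the lemma at the end of Section~2. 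Plugging this estimate into the tail sum yields
$$ \sum_{j\ge 1} |\lambda_{2j}|\, \|f_{2j}\|_2\, \|g_{2j}\|_2 \;\le\; C\sum_{j\ge 1} \Bigl(\tfrac{2 j s}{n}\Bigr)^{\!j}, $$
whose leading term is $O(s/n)$ and whose subsequent terms decay geometrically so long as $s\le cn$. The hardest part will be pinning down the correct form of the spherical hypercontractive inequality (the most commonly cited version is for the heat or Poisson semigroup rather than for the level-$k$-power semigroup $T_\rho$, so some interpolation or extraction from the literature is required) and handling the very large-$j$ tail, where the elementary bound $(2j/n)^j$ on $|\lambda_{2j}|$ is no longer useful; for that tail one combines the trivial $|\lambda_{2j}|\le 1$ with a $j$-dependent choice of $p$ in the hypercontractive inequality, showing that each remaining term is itself dominated by $(Cs/n)^j$ and contributes at most a further $O(s/n)$ to the total.
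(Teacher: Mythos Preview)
Your plan is the paper's own argument: pass to the spherical Radon (Funk) transform, diagonalize it on spherical harmonics, peel off the $k=0$ contribution, and control the remainder level by level via Cauchy--Schwarz, eigenvalue decay of $R$, and a hypercontractive bound on $\|Proj_{\cS_k}f\|_2$. Two points in your execution need correction.

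First, the hypercontractivity actually available on $S^{n-1}$ is for the heat semigroup $U_\rho=\rho^{-\triangle}$, which multiplies $\cS_k$ by $\rho^{\lambda_k}$ with $\lambda_k=k(k+n-2)$; the statement is $\|U_\rho f\|_2\le\|f\|_p$ for $\rho=(p-1)^{1/(2n-2)}$, coming from the spherical log-Sobolev inequality. The Poisson-type operator $T_\rho$ with multiplier $\rho^k$ that you invoke is not what this delivers. Since $\lambda_k/(2n-2)\approx k/2$ for $k=O(n)$ the two are interchangeable up to constants in the relevant range, but one should work with the heat version.

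Second, and this is a genuine gap: writing $\alpha=\log(2\|f\|_\infty)$, $\beta=\log(2\|g\|_\infty)$, your fixed choice $p-1\asymp 1/\alpha$ yields only $\|f_{2j}\|_2\le C\alpha^j$, so the $j$th term of your sum is of order $(Cjs/n)^j$; this is \emph{not} summable---the terms explode once $j\gtrsim n/s$, and your claim of geometric decay fails there. The paper instead optimizes $p$ separately at each level to obtain $\|Proj_{\cS_{2j}}f\|_2\le (C\max(1,\alpha/j))^{\lambda_{2j}/(2n-2)}$; the extra factor of $1/j$ is precisely what makes the sum over $1\le j<T=\lfloor\delta n\rfloor$ geometric. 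For the tail $j\ge T$ the paper does the \emph{opposite} of your proposal: it keeps the eigenvalue smallness (monotonicity gives $|\mu_{2j}|\le|\mu_{2T}|\le e^{-cn}$) and discards the level-wise structure, bounding $\sum_{j\ge T}\|f_{2j}\|_2\|g_{2j}\|_2\le\|f\|_2\|g\|_2\le e^{(\alpha+\beta)/2}$ by one global Cauchy--Schwarz. Replacing $|\lambda_{2j}|$ by $1$ as you suggest cannot give $(Cs/n)^j$ once $j\gg\max(\alpha,\beta)$, since the only remaining control on $\|f_{2j}\|_2\|g_{2j}\|_2$ is then $\|f\|_2\|g\|_2$.
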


In order to derive Theorem~\ref{thm:mainequator}, let $E$ be the set of all $y \in S^{n-1}$
for which the subspace $y^\perp \subset \RR^n$ orthogonal to $y$ satisfies
$$ \frac{\sigma_{y^\perp}(A \cap y^\perp)}{\sigma(A)} \ge 1+t. $$
Let $f=1_A/\sigma(A)$ and $g=1_E / \sigma(E)$ be the normalized indicator functions of $A$ and $B$, respectively.
Then it follows that
$$ \int_{\cV_n} f(x) g(y) \d \sigma_\cV(x,y) \ge 1+t$$
since the left hand side is exactly the average of $\sigma_{y^\perp}(A \cap y^\perp) / \sigma(A)$ over $y$ chosen uniformly from $E$.
Hence by Theorem~\ref{thm:mainsymmetric},
$$ t \le \frac{C \log (2/\sigma(A)) \log(2/\sigma(E))}{n}.$$
Rearranging, we obtain that
$$\sigma(E) < C \exp(-cnt/\log (2/\sigma(A))).$$
Repeating a similar argument for the lower bound, we obtain Theorem~\ref{thm:mainequator}.

Our proof of Theorem~\ref{thm:mainsymmetric} resembles a small jigsaw puzzle, in which all of the
pieces are known mathematical constructions that have to be put in
place in order to yield a proof. Therefore most of this section is
devoted to a brief summary of standard mathematical material, such
as some basic features of spherical harmonics, the Laplacian,
log-Sobolev inequalities, hypercontractivity,
growth of $L^p$ norms of eigenfunctions, and the Radon transform and its eigenvalues.

\myparagraph{Spherical harmonics} We write $L^2(S^{n-1})$ for the space of all square-integrable
functions on $S^{n-1}$. For $U \in SO(n)$ and $f \in L^2(S^{n-1})$  denote
$$
 U(f)(x) = f(U^{-1} x) \quad \quad \quad \quad (x \in S^{n-1}).
$$
We say that $U(f)$ is the rotation of $f$ by $U$. For any integer $k
\geq 0$, there is a special finite-dimensional subspace $\cS_k
\subset L^2(S^{n-1})$ of smooth functions called the space of
``spherical harmonics of degree $k$." For instance, $\cS_0$
is the one-dimensional space of constant functions.
More generally, $\cS_k$ is defined as the
restriction to the sphere of all harmonic, homogenous polynomials of
degree $k$ in $\RR^n$. See, e.g., M\"uller~\cite{Muller66} or
Stein and Weiss~\cite{SteinW71} for a quick introduction and for more
information on spherical harmonics. The space $\cS_k$ is invariant
under rotations and hence provides a representation of $SO(n)$. This representation is
known to be irreducible, that is, for any subspace $E \subseteq \cS_k$ that is
invariant under rotations, we necessarily have
$$ E = \{ 0 \} \quad \quad \quad \text{or} \quad \quad \quad E =
\cS_k. $$
Moreover, these representations in $\cS_k$ for $k=0,1,\ldots$ are known to be inequivalent;
this follows, e.g., from the fact that their dimensions (given by $\binom{n+k-1}{n-1}-\binom{n+k-3}{n-1}$)
are all different (assuming $n \ge 3$).
 Elements of $\cS_k$ are orthogonal to elements of
$\cS_\ell$ for $k \neq \ell$. We denote by $Proj_{\cS_k}$ the
orthogonal projection operator onto $\cS_k$ in $L^2(S^{n-1})$. Then
any function $f \in L^2(S^{n-1})$ may be decomposed  as
$$
 f = \sum_{k=0}^{\infty} Proj_{\cS_k} f
$$
where the sum converges in $L^2(S^{n-1})$. This decomposition
of a function on $S^{n-1}$ is analogous to the
decomposition of a function on the Boolean hypercube into
Fourier levels.

\myparagraph{Laplacian}
Write $C^{\infty}(S^{n-1})$ for the space of infinitely differentiable functions
on $S^{n-1}$. For a function $f \in C^{\infty}(S^{n-1})$ and $x \in S^{n-1}$, we define
\begin{equation}  (\triangle f)(x) = \sum_{i=1}^{n-1} \left.
\frac{d}{dt^2} f \left( (\cos t) x + (\sin t) e_i \right) \right|_{t=0},
\label{laplace}
\end{equation} where $e_1,\ldots,e_{n-1}$ is an orthonormal basis of
$x^{\perp}$. Notice that for any orthogonal $x, y \in S^{n-1}$, the curve $t \mapsto (\cos t) x + (\sin t) y$ draws a great
circle on $S^{n-1}$, that visits $x$ at $t = 0$, and its tangent
vector at $t = 0$ is the vector $y$.
The right hand side of \eqref{laplace} does not depend
on the choice of the orthonormal basis $e_1,\ldots,e_{n-1}$. The
operator $\triangle$, acting from $C^{\infty}(S^{n-1})$ to itself,
is called the \emph{spherical Laplacian}.

\medskip One computes (see, e.g.,
\cite{SteinW71}) that for any $k \geq 0$ and $\vphi_k \in \cS_k$,
\begin{equation}\label{eq:laplacianevalues}
\triangle \vphi_k = -\lambda_k \vphi_k
\end{equation}
where
$$
\lambda_k = k (k + n - 2).
$$
The Laplacian thus has a complete system of orthonormal
eigenfunctions in $L^2(S^{n-1})$ (even though the Laplacian is
defined only for smooth functions and not in the entire space
$L^2(S^{n-1})$).

\myparagraph{Noise operator}
The noise operators on $S^{n-1}$ are
$$ U_{\rho} = \rho^{-\triangle} \quad \quad \quad \quad (0 \leq \rho \leq 1). $$
A priori, these operators are defined, say, on the dense space of
finite linear combinations of spherical harmonics.  Since the norm
of $U_{\rho}$ does not exceed one, we may uniquely extend $U_{\rho}$
to a self-adjoint operator
$ U_{\rho} : L^2(S^{n-1}) \rightarrow L^2(S^{n-1}) $
of norm one. From~\eqref{eq:laplacianevalues} it follows that for
any $k \geq 0$ and $\vphi_k \in \cS_k$,
$$ U_{\rho} \vphi_k = \rho^{\lambda_k} \vphi_k.$$

\myparagraph{Hypercontractivity}
We proceed with a short review of hypercontractivity, a subject going
back to Nelson~\cite{Nelson66}. For $p \geq 1$ and for a measurable function
$f: S^{n-1} \rightarrow \RR$ we write $ \| f \|_p =
\left(\int_{S^{n-1}} |f|^p \d \sigma \right)^{1/p} $ for the
$L^p$-norm of $f$. The hypercontractive inequality states that for any $1 \le p \le q$,
and any function $f \in L^p(S^{n-1})$,
\begin{equation}
\| U_{\rho} f \|_{q} \leq \| f \|_p \quad \quad \quad \quad
\text{for} \ \ 0 \le \rho \leq \left( \frac{p-1}{q-1}  \right)^{1 /
(2n-2)}. \label{eq:hyper}
\end{equation}

We now briefly describe how one proves such an inequality.
By differentiating with respect to $p$ and $q$,
Gross~\cite{Gross75} showed that hypercontractive inequalities such as the one
above are directly equivalent to so-called \emph{log-Sobolev inequalities}.
Indeed, a common  technique for proving
hypercontractive inequalities  is by proving the analogous
log-Sobolev inequality (as the latter is often cleaner and easier to
work with). More specifically, for our hypercontractive inequality~\eqref{eq:hyper},
the equivalent log-Sobolev inequality turns out to be
\begin{equation}
 \int_{S^{n-1}} f^2(x) \log \frac{f^2(x)}{\int f^2(y) \d \sigma(y) } d \sigma(x)
 \leq \frac{1}{n-1} \int_{S^{n-1}} |\nabla f(x)|^2 \d \sigma(x) \label{eq_1058}
\end{equation}
for any smooth $f: S^{n-1} \rightarrow \RR$ where $\nabla f$ denotes the gradient of $f$.
Finally, this (tight) inequality was proven by Rothaus~\cite{Rothaus86}.

We note that a slightly
weaker inequality, in which $\frac{1}{n-1}$ is replaced by
$\frac{1}{n-2}$ (leading to a corresponding worsening of the exponent in~\eqref{eq:hyper} from $1/(2n-2)$ to $1/(2n-4)$),
follows from the elegant \emph{Bakry-{\'E}mery criterion}
(see~\cite{BakryE85}, or e.g.,~\cite{BakryL06}). This criterion states
that a log-Sobolev inequality holds for any connected manifold whose Ricci curvature
is uniformly bounded from below by some positive constant. In our very special
case, the manifold is $S^{n-1}$, whose Ricci curvature is constantly $n-2$,
leading to~\eqref{eq_1058} with the slightly weaker
constant $\frac{1}{n-2}$. This slightly weaker version certainly
suffices for all of our needs in this paper.

\medskip Kahn, Kalai, and Linial~\cite{KahnKL88} realized that
hypercontractive inequalities such as \eqref{eq:hyper} imply certain
bounds on the growth of $L^p$ norms of the Laplacian eigenfunctions.
Although they focused on the Boolean hypercube, their idea can be applied
in much greater generality, and in particular to the sphere.
Indeed, suppose $\vphi_k \in \cS_k$ for some $k \geq 0$.  Then
$U_{\rho} \vphi_k = \rho^{\lambda_k} \vphi_k$. {}From \eqref{eq:hyper},
for any $1 \leq p \leq q$,
\begin{equation}   \| \vphi_k \|_q \leq
\left( \frac{q-1}{p-1} \right)^{\lambda_k / (2n-2)} \| \vphi_k \|_p.
\label{eq_1638}
\end{equation}
For large $n$ and fixed $k$, we have $\lambda_k / (2n - 2) \approx
k/2$. In this case, the bound \eqref{eq_1638} roughly says that
for any $t$, the set of points $x \in S^{n-1}$ where $|\vphi_k| \geq
t \| \vphi_k \|_1$ has measure at most  $C \exp(-c t^{2/k})$. The
following lemma runs in a similar vein, and provides an upper bound on the mass
that the indicator function of a set can have on each level of the spherical
harmonics decomposition.

\begin{lemma}\label{lem:KKL}
Suppose $f: S^{n-1} \rightarrow \RR$
satisfies $\| f \|_1 = 1$ and $\| f \|_{\infty} \leq M$. Then, for
any $k \geq 1$,
$$
 \left \| Proj_{\cS_k} f  \right \|_2 \leq \left( e \cdot \max\left( 1, \frac{\log M}{\lambda_k / (2n-2)}\right)
 \right)^{\lambda_k / (2n-2)}.
$$
\end{lemma}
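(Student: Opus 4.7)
\noindent\textbf{Proof plan for Lemma~\ref{lem:KKL}.} The plan is to follow the Kahn--Kalai--Linial paradigm adapted to the sphere, combining H\"older's inequality with the hypercontractive estimate \eqref{eq_1638} applied to elements of $\cS_k$. Set $g = Proj_{\cS_k} f$ and write $a = \lambda_k/(2n-2)$ for brevity; the target bound is then $\|g\|_2 \leq (e\max(1, \log M / a))^a$.

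\smallskip

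\noindent\emph{Step 1: Pair $f$ against $g$.} Since $Proj_{\cS_k}$ is a self-adjoint idempotent on $L^2(S^{n-1})$,
\[
 \|g\|_2^2 = \langle g, g\rangle = \langle Proj_{\cS_k} f, g\rangle = \langle f, g\rangle.
\]
For any conjugate exponents $p\in(1,2]$ and $q\in[2,\infty)$ with $1/p + 1/q = 1$, H\"older's inequality gives
\[
 \|g\|_2^2 \leq \|f\|_p \, \|g\|_q.
\]

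\smallskip

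\noindent\emph{Step 2: Upgrade $\|g\|_q$ via hypercontractivity.} Since $g \in \cS_k$, the noise operator acts as $U_\rho g = \rho^{\lambda_k} g$. Specializing \eqref{eq_1638} to the pair $(2,q)$ yields
\[
 \|g\|_q \leq (q-1)^{\lambda_k/(2n-2)} \|g\|_2 = (q-1)^{a} \|g\|_2.
\]
Dividing the inequality of Step~1 by $\|g\|_2$ (assuming it is nonzero; otherwise the lemma is trivial), we obtain
\[
 \|g\|_2 \leq \|f\|_p \, (q-1)^{a}.
\]

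\smallskip

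\noindent\emph{Step 3: Control $\|f\|_p$ by interpolation.} The standard $L^p$-interpolation inequality, together with the hypotheses $\|f\|_1 = 1$ and $\|f\|_\infty \leq M$, yields
\[
 \|f\|_p \leq \|f\|_1^{1/p} \|f\|_\infty^{1-1/p} \leq M^{1/q}.
\]
Combining gives the one-parameter family of bounds
\[
 \|g\|_2 \leq M^{1/q} \, (q-1)^{a} \qquad (q \geq 2).
\]

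\smallskip

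\noindent\emph{Step 4: Optimize over $q$.} Writing $b = \log M$, we wish to minimize $\exp(b/q + a \log(q-1))$ over $q \geq 2$. If $b \geq a$, choose $q - 1 = b/a$ (so $q \geq 2$ is automatic); then $b/q = ab/(a+b) \leq a$, and the bound becomes at most $e^{a}(b/a)^{a} = (e\,b/a)^{a}$. If $b < a$, choose $q = 2$; the bound reduces to $M^{1/2} \leq e^{a/2} \leq e^{a}$. In both regimes the right-hand side is at most $(e \max(1, b/a))^{a}$, which is precisely the claimed estimate.

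\smallskip

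I expect Steps~1--3 to be essentially automatic; the only place where one must be a little careful is the optimization in Step~4 and the verification that the two regimes glue together to give the uniform expression $e\max(1,\log M / (\lambda_k/(2n-2)))$ appearing in the statement. This is not a serious obstacle, just bookkeeping, so the proof should follow cleanly from the hypercontractive inequality \eqref{eq:hyper} as already recorded in the paper.
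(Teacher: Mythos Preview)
Your argument is correct and uses the same ingredients as the paper (interpolation $\|f\|_p \le M^{1-1/p}$, hypercontractivity, and optimization in the exponent), but the organization is dual to the paper's. The paper applies the hypercontractive inequality \eqref{eq:hyper} directly to $f$ with $q=2$, obtaining $(p-1)^{\lambda_k/(2n-2)}\|Proj_{\cS_k} f\|_2 = \|Proj_{\cS_k} U_\rho f\|_2 \le \|U_\rho f\|_2 \le \|f\|_p \le M^{p-1}$, and then optimizes over $p\in[1,2]$; you instead project first and apply \eqref{eq_1638} to $g=Proj_{\cS_k} f$ from $L^2$ up to $L^q$, closing the loop via the H\"older pairing $\|g\|_2^2=\langle f,g\rangle$. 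Since $U_\rho$ is self-adjoint these are genuinely the same estimate read in two directions (your $q-1$ is the paper's $1/(p-1)$), and the resulting one-parameter bound and its optimum coincide. The paper's route is marginally shorter because it avoids the H\"older step; your route makes the role of the eigenfunction estimate \eqref{eq_1638} more explicit. Either way the case split in your Step~4 matches the paper's split into $\lambda_k \gtrless (2n-2)\log M$.
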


\begin{proof}
First, note that for any $p \ge 1$,
$$
 \| f \|_{p} = \left( \int_{S^{n-1}} |f|^p \d \sigma \right)^{1/p}
\leq \left( M^{p-1} \int_{S^{n-1}} |f| \d \sigma \right)^{1/p} =
M^{(p-1) / p} \le M^{p-1}.
$$
In particular, since $\| Proj_{\cS_k} f \|_2 \leq \|f\|_2 \le M$, we obtain
that the lemma holds whenever $\lambda_k > (2n-2) \log M$. So assume from now on that
$\lambda_k \le (2n-2) \log M$.
We use \eqref{eq:hyper} for $q=2$ and obtain that for any $1 \le p \le 2$,
$$  \|  U_{\rho} f \|_{2} \leq \| f \|_{p} \leq M^{p-1} \quad \quad \text{for} \ \rho
= \left( p-1 \right)^{1/(2n-2)}. $$ Projecting to $\cS_k$, we see
that for any $1 \leq p \leq 2$,
$$
  (p-1)^{\frac{\lambda_k}{2n-2}} \left \|
Proj_{\cS_k}f \right \|_{2} = \|  Proj_{\cS_k} (U_{\rho} f) \|_{2}
\leq \| U_{\rho} f \|_2 \leq  M^{p-1}.
$$
We complete the proof by choosing $ p = 1 + \frac{\lambda_k}{(2n-2)
\log M} \leq 2$.
\end{proof}

\myparagraph{Radon transform}
Recall that for $\theta \in S^{n-1}$ we write $\sigma_{\theta^{\perp}}$
 for the uniform probability measure on
the sphere $S^{n-1} \cap \theta^{\perp}$. Then the spherical Radon
transform $R(f)$ of an integrable function  $f: S^{n-1} \rightarrow \RR$ is
defined as
$$ R (f)(\theta) = \int_{S^{n-1} \cap \theta^{\perp}} f(x) \d \sigma_{\theta^{\perp}}(x), \quad \quad \quad \quad (\theta \in S^{n-1}). $$
So $R(f)$ is simply the average of $f$ on the equator of vectors orthogonal to $\theta$.
Observe that for
functions $f, g \in L^2(S^{n-1})$, we have
\begin{align}\label{eq:ortohgonalhelga}
 \int_{\cV_{n}} f(x) g(y) \d \sigma_{\cV}(x,y) =
 \int_{S^{n-1}} f(x) Rg(x) \d \sigma(x).
\end{align}
This equality describes the intuitive fact that integrating uniformly over all
orthogonal pairs $(x,y)$ is the same as integrating uniformly over $x$,
and then uniformly over all $y$ in the orthogonal complement of $x$. See, e.g.,
Helgason~\cite[Chapter II]{Helgason99} for a more formal derivation.

Define a sequence of numbers $( \mu_k )_{k=0,1,\ldots}$ as follows.
Suppose $X = (X_1,\ldots,X_{n-1})$ is a random vector that is
uniformly distributed in $S^{n-2}$. For an even $k \geq 0$ denote
$$
 \mu_k = (-1)^{k/2} \EE [X_1^k],
$$
and for odd $k$ set $\mu_k = 0$. We now show that $\cS_k$ are the eigenspaces
of $R$ with $\mu_k$ being the corresponding eigenvalues.

\begin{lemma}\label{lem_1809}
For any $k \geq 0$ and
$\vphi_k \in \cS_k$,
$$
 R(\vphi_k) = \mu_k \vphi_k. $$
\end{lemma}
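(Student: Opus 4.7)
The plan is to combine the rotational symmetry of $R$ with Schur's lemma, and then pin down the eigenvalue on each $\cS_k$ by evaluating $R$ on a single explicit harmonic polynomial.

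First I would verify that $R$ commutes with the $SO(n)$-action on $L^2(S^{n-1})$, that is, $R(Uf) = U(Rf)$ for every $U \in SO(n)$. This is a one-line change of variables: the isometry $U^{-1}$ maps $\theta^\perp \cap S^{n-1}$ onto $(U^{-1}\theta)^\perp \cap S^{n-1}$ and pushes forward the uniform measure to the uniform measure. Second, I would invoke the paragraph just before the lemma, where the paper notes that the $\cS_k$ are irreducible and pairwise inequivalent representations of $SO(n)$. The composition $Proj_{\cS_j} \circ R|_{\cS_k} : \cS_k \to \cS_j$ is $SO(n)$-equivariant, so by Schur's lemma it vanishes for $j \ne k$ and is a scalar $\nu_k \cdot \mathrm{id}$ for $j = k$. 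Hence $R \vphi_k = \nu_k \vphi_k$ for every $\vphi_k \in \cS_k$, with $\nu_k$ depending only on $k$ and $n$.

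It remains to identify $\nu_k$ with $\mu_k$. For odd $k$, each $\vphi_k \in \cS_k$ is the restriction of an odd homogeneous polynomial, so $\vphi_k(-x) = -\vphi_k(x)$; since $\sigma_{\theta^\perp}$ is invariant under $x \mapsto -x$, the integral defining $R(\vphi_k)(\theta)$ equals its own negative, forcing $\nu_k = 0 = \mu_k$. For even $k$, I would test against the explicit harmonic polynomial $h(x) = \mathrm{Re}\bigl((x_1 + i x_2)^k\bigr)$ on $\RR^n$; its harmonicity is immediate from $\partial_1^2 + \partial_2^2$ annihilating $(x_1 + i x_2)^k$ via $1 + i^2 = 0$, while higher-indexed second partials are zero. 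Evaluating at $\theta = e_1$ gives $h(e_1) = 1$, and on the equator $e_1^\perp \cap S^{n-1}$, which is a copy of $S^{n-2}$, the coordinate $x_1$ vanishes, so
$$R(h)(e_1) = \int_{e_1^\perp \cap S^{n-1}} \mathrm{Re}\bigl((i x_2)^k\bigr) \, d\sigma(x) = (-1)^{k/2} \EE\bigl[X_1^k\bigr],$$
where $X$ is uniform on $S^{n-2}$. Dividing by $h(e_1) = 1$ yields $\nu_k = (-1)^{k/2} \EE[X_1^k] = \mu_k$, as required.

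The proof is essentially formal representation theory once the rotational symmetry of $R$ is noted, so I do not expect a genuine obstacle. The only technical care concerns the application of Schur: one should note that $R$ is a contraction of $L^\infty$ (since $\sigma_{\theta^\perp}$ is a probability measure) and hence bounded on $L^2$, so its image on the finite-dimensional $\cS_k$ sits in $L^2(S^{n-1})$ and can be legitimately decomposed against the $\cS_j$.
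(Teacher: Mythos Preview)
Your proof is correct and follows the same overall strategy as the paper: establish that $R$ commutes with rotations, invoke Schur's lemma (using irreducibility and pairwise inequivalence of the $\cS_k$) to conclude $R$ acts as a scalar $\nu_k$ on each $\cS_k$, and then identify $\nu_k$ by evaluating on a single explicit element of $\cS_k$.

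The only genuine difference is the choice of test function. The paper uses the zonal harmonic $f_k(x)=G_k(x\cdot e)$ built from the Gegenbauer polynomial $G_k(t)=\EE\big[(t+iX_1\sqrt{1-t^2})^k\big]$, citing M\"uller for the fact that $f_k\in\cS_k$, and reads off $\nu_k=G_k(0)=(-1)^{k/2}\EE[X_1^k]$ for even $k$ and $0$ for odd $k$ in one stroke. You instead take $h(x)=\mathrm{Re}\big((x_1+ix_2)^k\big)$, verify harmonicity by hand via $(\partial_1^2+\partial_2^2)(x_1+ix_2)^k=0$, and compute $R(h)(e_1)$ directly; the odd case you dispatch separately by parity. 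Your route is slightly more self-contained (no external reference needed for the test function), while the paper's Gegenbauer choice handles both parities uniformly. Either way the computation is the same in substance: one is effectively averaging your $h$ over rotations fixing $e_1$ to obtain the zonal harmonic. One small quibble: your closing remark that ``$R$ is a contraction of $L^\infty$ \ldots\ and hence bounded on $L^2$'' is not a valid inference as stated, but all you actually need is that $R$ maps the finite-dimensional space $\cS_k\subset L^\infty$ into $L^\infty\subset L^2$, which is immediate.
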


\begin{proof}
The Radon transform clearly commutes with rotations. Therefore, because the $\cS_k$'s give rise
to inequivalent irreducible representations, Schur's lemma implies that $R$ must have the $\cS_k$'s as its
eigenspaces. We briefly recall the proof of this standard representation-theoretic fact.
Consider the restriction $R_{k,j}$ of $Proj_{\cS_j} R$ to an operator
from $\cS_k$ to $\cS_j$ for some $k,j \ge 0$. Our goal is to show that $R_{k,j}$ is zero
whenever $k\neq j$ and a multiple of the identity otherwise.
Since $R_{k,j}$ commutes with the action of $SO(n)$,
and $\cS_k$ is irreducible, we have that $\ker R_{k,j}$ is either all of $\cS_k$ or $\{0\}$.
In the former case $R_{k,j}=0$ and we are done, so assume the latter case.
By the same argument the image of $R_{k,j}$ is either all of $\cS_j$ or $\{0\}$,
and since we assumed $R_{k,j}\neq 0$, it must be the former. Hence $R_{k,j}$ is an isomorphism between the
representation on $\cS_k$ and on $\cS_j$, which is impossible when
$k \neq j$ since we know that $\cS_k$ and $\cS_j$ are inequivalent representations.
So assume $k=j$, and let $\lambda \in \RR$ be an arbitrary eigenvalue of
$R_{k,k}$ (there exists such an eigenvalue since $R_{k,k}$ is a symmetric operator). Then the kernel
of $\lambda I - R_{k,k}$ must also be either all of $\cS_k$ or $\{0\}$;
the latter is impossible since $\lambda$ is an eigenvalue, hence we necessarily have $R_{k,k} = \lambda I$.

Our next goal is to show that the $\mu_k$'s are the corresponding
eigenvalues. Fix some arbitrary $e \in S^{n-1}$. For $k \ge 0$, we define the function $f_k:S^{n-1} \to \RR$
by
$$ f_k(x) = G_k(x \cdot e) \quad \quad \quad \quad (x \in S^{n-1})
$$
where $G_k:[-1,1] \to \RR$ is the Gegenbauer polynomial (see, e.g.,
M\"uller~\cite{Muller66}),
$$
 G_k(t) = \EE \left(t + i X_1 \sqrt{1 - t^2}  \right)^k.
$$
Here, $i^2 = -1$ and $X = (X_1,\ldots,X_{n-1})$ is a random
vector that is distributed uniformly over the sphere $S^{n-2}$.
The function $f_k$ is known to be a spherical harmonic of degree $k$, i.e., in $\cS_k$~\cite{Muller66},
and by our above discussion, must be an eigenfunction of $R$, i.e., $Rf$ is proportional to $f$.
{}From the definition of the Radon transform,
$$ (R f)(e) = G_k(0) \quad \quad \quad \text{and} \quad \quad \quad
f(e) = G_k(1) = 1. $$ We conclude that $G_k(0)$ is the eigenvalue
corresponding to $\cS_k$. It remains to notice that
$G_k(0)$ vanishes for odd $k$ and equals  $(-1)^{k/2} \EE
X_1^k$ for even $k$, and hence equals $\mu_k$ for all $k$.
\end{proof}

The next technical lemma gives upper bounds on the eigenvalues $\mu_k$.
\begin{lemma}\label{lem:muk}
Suppose $n \geq 10$. Then, the sequence $|\mu_0|, |\mu_2|, |\mu_4|, \ldots$ is
non-increasing, and moreover, for all $k \geq 1$,
$$ |\mu_k| \leq \left( C \frac{k}{n} \right)^{k/2}.$$
\end{lemma}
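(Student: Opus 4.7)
The plan is to derive an explicit formula for $|\mu_k|$ via the moments of the first coordinate of a uniform vector on $S^{n-2}$, and then bound it by elementary estimates. For odd $k$ the inequality is trivial since $\mu_k = 0$, so fix $k = 2m \geq 2$. Using the fact that if $X$ is uniform on $S^{n-2}$ then $X_1^2$ has a $\mathrm{Beta}(1/2,(n-2)/2)$ distribution (or equivalently, via a direct polar-coordinate integration), the standard moment calculation yields
$$ |\mu_{2m}| \;=\; \EE[X_1^{2m}] \;=\; \frac{(2m-1)!!}{(n-1)(n+1)(n+3) \cdots (n+2m-3)}, $$
a ratio of $m$ factors in the numerator and $m$ factors in the denominator.

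For the monotonicity claim, I would use the most direct argument available: because $|X_1| \leq 1$ pointwise on $S^{n-2}$, the pointwise inequality $X_1^{2m+2} \leq X_1^{2m}$ gives $|\mu_{2m+2}| \leq |\mu_{2m}|$ after taking expectations. (Equivalently, one reads it off the explicit formula, since passing from $m$ to $m+1$ multiplies the ratio by $(2m+1)/(n+2m-1)$, which is at most $1$ whenever $n \geq 2$.)

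For the quantitative bound I would estimate numerator and denominator independently. In the numerator, $(2m-1)!! \leq (2m)^m$ since each factor is strictly less than $2m$. In the denominator, under the hypothesis $n \geq 10$ each of the $m$ factors is at least $n-1 \geq \tfrac{9}{10}n$. Combining the two,
$$ |\mu_k| \;\leq\; \frac{(2m)^m}{(9n/10)^m} \;=\; \left(\frac{20m}{9n}\right)^{m} \;=\; \left(\frac{10 k}{9 n}\right)^{k/2}, $$
which has the form $(Ck/n)^{k/2}$ with any universal constant $C \geq 10/9$.

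There is no real obstacle: the only things to watch are the derivation of the moment formula and the hypothesis $n \geq 10$, which is there solely so that $n-1$ and $n$ are interchangeable up to a constant factor. A sharper bound of the form $|\mu_{2m}| \lesssim (2m/(e(n-1)))^m$ could be obtained by applying Stirling to $(2m-1)!! = (2m)!/(2^m m!)$, but the crude bound above already suffices for the statement of the lemma and for the subsequent use of $\mu_k$ in the proof of Theorem~\ref{thm:mainsymmetric}.
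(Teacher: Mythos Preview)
Your argument is correct. The monotonicity step is identical to the paper's (both use $|X_1|\le 1$), but for the quantitative bound you take a more direct route: you write down the exact Beta moment formula $|\mu_{2m}| = (2m-1)!!\,/\,\prod_{j=0}^{m-1}(n-1+2j)$ and bound each factor crudely, whereas the paper instead compares the density $(1-x^2)^{(n-4)/2}$ to a Gaussian $e^{-(n-4)x^2/2}$ and invokes the closed form for Gaussian moments. Your approach is shorter, avoids the separate estimate of the normalising constant $\int_{-1}^1 (1-x^2)^{(n-4)/2}\,\d x \ge c/\sqrt{n}$, and even yields an explicit constant $C=10/9$; the paper's Gaussian comparison, on the other hand, makes transparent why $\mu_k$ behaves like a Gaussian moment of variance $\approx 1/n$, which is the underlying heuristic. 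Either argument is perfectly adequate for the application in Theorem~\ref{thm:mainsymmetric}.
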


\begin{proof}
The first claim follows immediately from the fact that $|X_1| \le 1$ and $|\mu_{2k}| = \EE[|X_1|^{2k}]$.
For the second claim, notice that the density of $X_1$ is proportional to $(1 - x^2)^{(n-4)/2}$
for $x \in [-1,1]$, and vanishes outside this interval. Hence, our goal is to
prove that for all even $k \ge 2$,
$$ \int_{-1}^1 x^k (1 - x^2)^{(n-4)/2} \d x \leq \left( C \frac{k}{n}
\right)^{k/2} \int_{-1}^1 (1 - x^2)^{(n-4) / 2} \d x. $$
The integral on the
right hand side is at least $c / \sqrt{n}$ (this is true even for
the integral from $-1/\sqrt{n}$ to $1/\sqrt{n}$). The integral on
the left hand side may be estimated as follows:
$$
\int_{-1}^1 x^k (1 - x^2)^{\frac{n-4}{2}} \d x \leq \int_{-1}^1 x^k e^{- \frac{n-4}{2} x^2}  \d x
\leq \int_{-\infty}^\infty x^k e^{- \frac{n-4}{2} x^2}  \d x.
$$
The latter integral is exactly the $k$th moment of a normal variable with mean $0$ and variance
$1/(n-4)$, times the missing normalization factor of $\sqrt{2 \pi / (n-4)}$. A standard fact
is that for even $k$ this moment is
$$(n-4)^{-k/2} \cdot (k-1)!! \le \left(\frac{k}{n-4}\right)^{k/2}$$
where $(k-1)!! = (k-1)(k-3)\cdots 1$. The lemma follows.
\end{proof}

\begin{proof}[Proof of Theorem~\ref{thm:mainsymmetric}]
It suffices to prove the theorem under the assumption that $n \geq 10$ (otherwise
there is no $s \leq cn$, for a sufficiently small universal constant $c > 0$). By Lemma~\ref{lem_1809}
and~\eqref{eq:ortohgonalhelga},
$$ \int_{\cV_n} f(x) g(y) \d \sigma_\cV(x,y) =  \int_{S^{n-1}}
f R \left(g \right) \d \sigma = \sum_{k=0}^{\infty} \mu_k
 \int_{S^{n-1}} Proj_{\cS_k} \left( f \right) Proj_{\cS_k}
\left( g \right) \d \sigma.
$$
Note that $\mu_0 = 1$ and $Proj_{\cS_0} \left( f \right) \equiv
Proj_{\cS_0} \left( g \right) \equiv 1$. Therefore, by the Cauchy-Schwarz inequality,
$$
 \left| \int_{\cV_n} f(x) g(y) \d \sigma_\cV(x,y)
 \, - \, 1 \right| \leq
 \sum_{k=1}^{\infty} |\mu_{2k}|  \| Proj_{\cS_{2k}} f \|_2 \| Proj_{\cS_{2k}} g \|_2.
$$

We will prove the theorem by showing that the latter sum is at most $C \alpha \beta / n$,
where $\alpha = \log (2 \|f\|_\infty)$ and $\beta = \log (2 \|g\|_\infty)$.
Observe that $\alpha,\beta \ge 1/2$ and recall our assumption that $\alpha \beta$ is at most $cn$.
We start by analyzing the part of the sum in
which $k$ runs from $1$ to $T-1$ where $T=\floor{\delta n}$ for some sufficiently small constant $\delta>0$.
Using Lemmas~\ref{lem:KKL} and~\ref{lem:muk}, we have the bounds
\begin{align*}
|\mu_{2k}| &\le \left(\frac{Ck}{n}\right)^k, \\
\| Proj_{\cS_{2k}} f \|_2 &\le  \left( C \max\left( 1, \frac{\alpha}{k}\right) \right)^{\lambda_{2k} / (2n-2)},
\end{align*}
and similarly for $g$ with $\beta$.
Therefore,
$$
\sum_{k=1}^{T-1} |\mu_{2k}|  \| Proj_{\cS_{2k}} f \|_2 \| Proj_{\cS_{2k}} g \|_2
\le
\sum_{k=1}^{T-1} \left(\frac{Ck}{n}\right)^k
     \left( C \max\left( 1, \frac{\alpha}{k}\right) \right)^{\lambda_{2k} / (2n-2)}
     \left( C \max\left( 1, \frac{\beta}{k}\right) \right)^{\lambda_{2k} / (2n-2)}.
$$

The term $k=1$ is at most
$$ \frac{C \alpha \beta}{n}.$$
We will now show that the terms in the latter sum decay geometrically,
and hence we can also bound the sum by $ C \alpha \beta / n$.
To this end, first notice that
$$ \left(\frac{C(k+1)}{n}\right)^{k+1} \Big/ \left(\frac{Ck}{n}\right)^k =
\frac{C(k+1)}{n}  \cdot \left(\frac{k+1}{k}\right)^k \le
\frac{\tilde{C}k}{n}.$$
Second,
\begin{align*}
\left( C \max\left( 1, \frac{\alpha}{k+1}\right) \right)^{\lambda_{2k+2} / (2n-2)} \Big/
\left( C \max\left( 1, \frac{\alpha}{k}\right) \right)^{\lambda_{2k} / (2n-2)}
&\le
\left( C \max\left( 1, \frac{\alpha}{k}\right) \right)^{(\lambda_{2k+2}-\lambda_{2k}) / (2n-2)} \\
&=
\left( C \max\left( 1, \frac{\alpha}{k}\right) \right)^{1+\frac{4k+1}{n-1}} \\
&\le \tilde{C} \max\left( 1, \frac{\alpha}{k}\right).
\end{align*}
Hence the ratio between the term for $k+1$ and that for $k$ is at most
$$ C \frac{k}{n} \max\left( 1, \frac{\alpha}{k}\right) \max\left( 1, \frac{\beta}{k}\right) \le \frac{1}{2},$$
as $k \leq \delta n$, once we choose $\delta$ to be a sufficiently small positive universal constant. This implies
that we can upper bound the sum from $1$ to $T-1$ by $C \alpha \beta/n$, as required.

It remains to analyze the less significant part of the sum, in which $k$ runs from $T=\floor{\delta n}$ to infinity.
Then, by Lemma~\ref{lem:muk} and another application of Cauchy-Schwarz,
\begin{align*}
\sum_{k=T}^{\infty} |\mu_{2k}|  \| Proj_{\cS_{2k}} f \|_2 \| Proj_{\cS_{2k}} g \|_2 &\le
|\mu_{2T}| \sum_{k=T}^{\infty} \| Proj_{\cS_{2k}} f \|_2 \| Proj_{\cS_{2k}} g \|_2 \\
&\le |\mu_{2T}| \|f\|_2 \|g\|_2 \\
&\le \exp(\alpha+\beta - cn) \\
&\le \frac{C}{n} \le \frac{\tilde{C} \alpha \beta}{n},
\end{align*}
under the legitimate assumption that $\alpha \beta \leq \tilde{c} n$. We conclude that the entire
sum is bounded by $C \alpha \beta / n$. \end{proof}

\section{Sampling Sets by Lower Dimensional Subspaces}\label{sec:maintechnical}

Our next step is to iterate Theorem~\ref{thm:mainsymmetric},
using a certain martingale process, in order to
obtain a corresponding theorem for the Grassmannian.
The constants $0.1$ and $9/10$ appearing below do not play any special role
and can be replaced with any other constants (as long as the former is positive and the
latter is smaller than $1$).

\begin{theorem}\label{thm:mainsamplinglowdim}
Let $1 \leq m \leq 9n / 10$.
Suppose that $A \subseteq S^{n-1}$ is a measurable set with
$ \sigma(A) \geq C \exp(-c n^{1/3})$.
Assume that $H$ is a uniformly chosen $(n-m)$-dimensional subspace. Then,
$$ \left| \frac{\sigma_H(A \cap H)}{\sigma(A)} - 1 \right| < 0.1 $$
except with probability at most $C \exp(-c n^{1/3})$.
Here, $c, C> 0$ are universal constants.
\end{theorem}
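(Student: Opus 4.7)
\textbf{Proof proposal for Theorem~\ref{thm:mainsamplinglowdim}.}
The plan is to realize the random $(n-m)$-dimensional subspace as the terminal element of a chain of uniformly random codimension-one refinements, build a martingale out of the normalized measures of $A$ along this chain, and then apply Bernstein-type martingale concentration driven by the one-step bound of Theorem~\ref{thm:mainequator}. Concretely, set $H_0 = \RR^n$ and, for $i=1,\ldots,m$, let $H_i$ be a uniformly random $(n-i)$-dimensional subspace of $H_{i-1}$, chosen independently given $H_{i-1}$. Then $H_m$ has the same distribution as $H$, and the quantity of interest is
$$ X_i \;=\; \frac{\sigma_{H_i}(A \cap H_i)}{\sigma(A)}, \qquad X_0 = 1. $$
By rotational invariance in $H_{i-1}$, averaging $\sigma_{H_i}(A\cap H_i)$ over a uniform hyperplane through the origin in $H_{i-1}$ recovers $\sigma_{H_{i-1}}(A \cap H_{i-1})$; this is precisely Eq.~\eqref{eq:ortohgonalhelga} applied inside $H_{i-1}$. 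Hence $\EE[X_i \mid H_{i-1}] = X_{i-1}$, so $(X_i)$ is a martingale with respect to the filtration generated by $H_0,H_1,\ldots$.

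The one-step control comes from Theorem~\ref{thm:mainequator}, applied inside the ambient space $H_{i-1}$ (of dimension $n-i+1 \ge n/10$) with the set $A \cap H_{i-1}$. However, that theorem gives a tail that deteriorates as $\sigma_{H_{i-1}}(A\cap H_{i-1})$ shrinks, so one cannot afford the measure to drift too far from $\sigma(A)$. To handle this, I would introduce the stopping time
$$ \tau \;=\; \min\bigl\{\, i \ge 1 : |X_i - 1| > 0.05 \,\bigr\} \wedge m $$
and work with the stopped martingale $Y_i = X_{i\wedge \tau}$. On the event $\{\tau \ge i\}$ we have $\sigma_{H_{i-1}}(A \cap H_{i-1}) \ge 0.95\,\sigma(A) \ge c\exp(-cn^{1/3})$, so $\log(2/\sigma_{H_{i-1}}(A\cap H_{i-1})) \le C n^{1/3}$. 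Theorem~\ref{thm:mainequator} then gives, for every $t\in(0,1)$,
$$ \PP\bigl[\,|X_i - X_{i-1}| \ge t\, X_{i-1} \,\bigm|\, H_{i-1},\; \tau \ge i\,\bigr] \;\le\; C\exp\!\left(-c\, n^{2/3}\, t\right). $$
Since $X_{i-1} \le 1.05$ on $\{\tau\ge i\}$, this says that $Y_i - Y_{i-1}$ is subexponential with scale $O(n^{-2/3})$ and conditional variance at most $C/n^{4/3}$.

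The endgame is a standard Freedman/Bernstein inequality for martingales with bounded predictable quadratic variation and sub-exponential increments. The accumulated conditional variance over $m \le 9n/10$ steps is
$$ \sum_{i=1}^{m} \EE\!\left[(Y_i - Y_{i-1})^2 \bigm| H_{i-1}\right] \;\le\; \frac{Cm}{n^{4/3}} \;\le\; \frac{C}{n^{1/3}}, $$
and each increment is bounded by $O(n^{-2/3}\log(1/\delta))$ except with probability $\delta$ (this last bound is needed to truncate the sub-exponential tails before invoking Freedman). Freedman's inequality then yields
$$ \PP\!\left[\,|Y_m - 1| \ge 0.05\,\right] \;\le\; 2\exp\!\left(-\,\frac{c\cdot (0.05)^2}{\,n^{-1/3} + n^{-2/3}\cdot 0.05\,}\right) \;\le\; C\exp(-c\, n^{1/3}), $$
and on the complementary event $\tau = m$ and $|X_m - 1| \le 0.05 < 0.1$, which is the claim.

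The main obstacle is precisely the self-referential nature of the one-step bound: Theorem~\ref{thm:mainequator} is useless once the measure has halved, because the $\log(2/\sigma(A))$ factor blows up. The stopping-time device above is what breaks the circularity, by ensuring that whenever we need to invoke the one-step bound we already know the current measure is within a constant of $\sigma(A)$, which keeps the effective ``noise parameter'' per step at the ideal scale $O(n^{-2/3})$ throughout the chain. Once this is in place, the calculation reduces to a routine martingale Bernstein estimate, with the $n^{1/3}$ exponent arising as the geometric mean between the $n$ steps and the $n^{-2/3}$ per-step scale.
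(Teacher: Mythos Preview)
Your plan matches the paper's: realize $H$ as the end of a chain $\RR^n=H_0\supset H_1\supset\cdots\supset H_m$ of random codimension-one refinements, control each step by Theorem~\ref{thm:mainequator}, localize so that the running measure never drifts far from $\sigma(A)$, and finish with a Bernstein-type bound. The execution differs in packaging: the paper works \emph{multiplicatively} with the step ratios $X_k=\sigma_{H_k}(A\cap H_k)/\sigma_{H_{k-1}}(A\cap H_{k-1})$ and proves bespoke moment bounds for their product (Lemma~\ref{lem:martingaleupper} and Corollary~\ref{cor:martingalelower}), localizing via the indicator events $B_k=\{\prod_{j\le k}X_j\le 1/2\}$ and $C_k=\{|X_k-1|\ge 1/2\}$ rather than a stopping time; you work \emph{additively} with the martingale $\sigma_{H_i}(A\cap H_i)/\sigma(A)$ and invoke Freedman. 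Both routes are valid and give the same $\exp(-cn^{1/3})$ exponent; yours has the advantage of citing an off-the-shelf inequality, while the paper's is self-contained.

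One point you should not leave implicit: Theorem~\ref{thm:mainequator} only controls the tail for $0<t<1$, so the increments of your stopped martingale are not genuinely sub-exponential at all scales---on $\{\tau\ge i\}$ the ratio $\sigma_{H_i}(A\cap H_i)/\sigma_{H_{i-1}}(A\cap H_{i-1})$ can in principle be as large as $1/\sigma_{H_{i-1}}(A\cap H_{i-1})\approx\exp(cn^{1/3})$. Your truncation remark is exactly the remedy, but truncating destroys the martingale property, and you must also check that the drift introduced is negligible: each step contributes at most $\PP[|X_i/X_{i-1}-1|\ge 1/2]\cdot\sup|X_i-X_{i-1}|\le C\exp(-cn^{2/3})\cdot C\exp(cn^{1/3})$, hence the total drift over $m\le n$ steps is $o(1)$. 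The paper carries out precisely this computation when it bounds $|\EE[Z_k\mid H_1,\ldots,H_{k-1}]-1|$. With that in place your Freedman estimate goes through (note that after truncation the effective increment bound is $M\asymp n^{-1/3}$, not $n^{-2/3}$ as written in your display; this is harmless since the variance term $n^{-1/3}$ already dominates the denominator).
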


We start with a few technical lemmas.
The first one below bounds the moments of a random variable that has an exponentially decaying tail around $1$.
We will apply it with random variables whose expectation is very close to $1$.
\begin{lemma}\label{lem:moments}
Let $R, \delta > 0$ and let  $Z$ be a non-negative random variable satisfying
that for any $t \ge 0$,
$$
\PP(|Z - 1| \geq t) \leq R \exp(-t/\delta).
$$
Then, for any $2 \leq \ell \leq (2\delta)^{-1}$,
$$
\EE [Z^{\ell}] \leq 1 + \ell \, \EE[Z-1] + 2 R (\ell \delta)^2.
$$
\end{lemma}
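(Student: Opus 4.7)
The plan is to expand $Z^\ell = (1+W)^\ell$ with $W := Z-1 \ge -1$ around the base point $W=0$, and then control each moment of $W$ using the exponential tail condition. The fundamental preliminary estimate is the moment bound
$$ \EE[|W|^k] \;=\; k\int_0^\infty t^{k-1}\,\PP(|W|\ge t)\,dt \;\le\; kR\int_0^\infty t^{k-1}e^{-t/\delta}\,dt \;=\; R\,k!\,\delta^k, $$
valid for every integer $k \ge 0$; morally, $|W|$ has moments no larger than those of an exponential random variable with scale $\delta$.

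When $\ell$ is a positive integer, the rest is a two-line calculation. The binomial theorem gives
$$ \EE[Z^\ell] \;=\; 1 + \ell\,\EE[W] + \sum_{k=2}^\ell \binom{\ell}{k}\EE[W^k], $$
and combining $\binom{\ell}{k} \le \ell^k/k!$ with $\EE[W^k] \le \EE[|W|^k] \le R\,k!\,\delta^k$ bounds each term of the sum by $R(\ell\delta)^k$. The hypothesis $\ell\delta \le 1/2$ then makes $\sum_{k\ge 2}(\ell\delta)^k$ a convergent geometric series with sum at most $(\ell\delta)^2/(1-\ell\delta) \le 2(\ell\delta)^2$, producing exactly the $2R(\ell\delta)^2$ error term claimed.

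For non-integer $\ell \in [2, (2\delta)^{-1}]$ I would replace the binomial step with Taylor's theorem with integral remainder,
$$ Z^\ell - 1 - \ell W \;=\; \ell(\ell-1)\int_0^{W} (W-s)(1+s)^{\ell-2}\,ds. $$
On $\{W \le 0\}$ one has $(1+s)^{\ell-2} \le 1$ for $\ell \ge 2$, yielding the clean bound $\tfrac{1}{2}\ell^2 W^2$, whose expectation is controlled by $\EE[W^2] \le 2R\delta^2$. On $\{W \ge 0\}$ I would use $(1+W)^\ell \le e^{\ell W}$, Taylor-expand the exponential, and apply the same moment bound $\EE[|W|^k] \le R\,k!\,\delta^k$ termwise to recover the same geometric sum. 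I do not expect a genuine obstacle: once the moment bound is in hand the whole argument is routine, and the only mild point of care is to keep the constant as stated in the non-integer case.
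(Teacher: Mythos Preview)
Your proposal is correct and follows essentially the same approach as the paper: both compute the moment bound $\EE[|W|^k] \le R\,k!\,\delta^k$ via the tail integral and then Taylor-expand $(1+W)^\ell$. The only difference is in handling non-integer $\ell$: the paper uses a single finite Taylor expansion with Lagrange remainder (bounding $(1+\xi)^{\ell-\lfloor\ell\rfloor}$ by $1+|x|$), whereas you split into $\{W\le 0\}$ and $\{W\ge 0\}$ and invoke $(1+W)^\ell\le e^{\ell W}$ on the latter---be aware that if you bound the two pieces separately using the full $\EE[W^2]$ you get $3R(\ell\delta)^2$, so to recover the stated constant $2$ you must combine the $k=2$ contributions from both pieces before applying $\EE[W^2]\le 2R\delta^2$.
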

\begin{proof}
Using the Taylor expansion, we have that for any $x \ge -1$,
\begin{align*}
 (1+x)^\ell &= 1 + \ell x + \sum_{k=2}^{\floor{\ell}-1} \binom{\ell}{k} x^k + \binom{\ell}{\floor{\ell}} (1+\xi)^{\ell - \floor{\ell}} x^{\floor{\ell}}  \\
   &\le 1 + \ell x + \sum_{k=2}^{\floor{\ell}} \frac{\ell^k}{k!} |x|^k + \frac{\ell^{\floor{\ell}}}{\floor{\ell}!} |x|^{\floor{\ell}+1},
\end{align*}
where $\xi$ is some real number between $x$ and $0$.
Next, for any $k \ge 1$,
\begin{align}
\EE[|Z-1|^k] &= \int_0^\infty \PP[|Z-1|^k \ge t] \d t \nonumber \\
&= \int_0^\infty k t^{k-1} \PP[|Z-1| \ge t] \d t \nonumber \\
&\le R \, k \, \int_0^\infty t^{k-1} \exp(-t/\delta) \d t = R \cdot k! \cdot \delta^k. \label{eq:moment}
\end{align}
Combining the two inequalities, we obtain
\begin{align*}
\EE [Z^{\ell}] &\le 1 + \ell \, \EE[Z-1] + R \sum_{k=2}^{\floor{\ell}} (\ell \delta)^k + R (\ell \delta)^{\floor{\ell}}(\floor{\ell}+1)\delta\\
  &\le 1 + \ell \, \EE[Z-1] + 2 R (\ell \delta)^2.
\end{align*}
\end{proof}

Our second lemma bounds the upper tail of a certain martingale-like product and is based on
a Bernstein-type inequality. We then derive as a corollary a similar bound on the lower tail.

\begin{lemma}\label{lem:martingaleupper}
Let $R,\delta > 0$ and let $Z_1,\ldots,Z_k$ be non-negative random variables where $k \le 1/(320R\delta^2)$. Assume that for all $1 \le i \le k$,
when conditioning on any values of $Z_1,\ldots,Z_{i-1}$, we almost surely have
\begin{align}
\EE[Z_i ~|~ Z_1, \ldots, Z_{i-1}] &\le 1+ \frac{1}{20k}, \label{eq:matingaleupperexp} \\
\PP[|Z_i - 1| \ge t ~|~ Z_1,\ldots,Z_{i-1}] &\le R \exp(-t/\delta) \quad \text{for all} \ t \geq 0. \label{eq:matingaleupperpr}
\end{align}
Then,
$$
\PP \left[ \prod_{i=1}^k Z_i \ge 1.1 \right] \le \left\{%
\begin{array}{ll}
    \exp(-1/(80\delta)+Rk/2), & k < 1/(80R\delta) \\
    \exp(-1/(12800 R k \delta^2)), & \hbox{otherwise}. \\
\end{array}%
\right.
$$
\end{lemma}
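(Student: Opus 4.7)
The plan is a Chernoff-type argument applied to the product martingale $M_k := \prod_{i=1}^k Z_i$. By Markov's inequality, for any exponent $\ell \geq 1$,
$$\PP\!\left[M_k \geq 1.1\right] \leq 1.1^{-\ell}\,\EE[M_k^\ell],$$
so the task reduces to bounding $\EE[M_k^\ell]$ for a well-chosen $\ell$. First I would unfold the $\ell$th moment using the tower property, conditioning on $Z_1,\ldots,Z_{i-1}$. Under this conditioning, assumptions~\eqref{eq:matingaleupperexp} and~\eqref{eq:matingaleupperpr} are exactly the hypotheses of Lemma~\ref{lem:moments} with shift $\EE[Z_i-1\mid\cdot]\le 1/(20k)$ and tail parameters $R,\delta$. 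Hence for every $\ell$ in the admissible range $2 \le \ell \le 1/(2\delta)$,
$$\EE[Z_i^\ell \mid Z_1,\ldots,Z_{i-1}] \;\le\; 1 + \frac{\ell}{20k} + 2R(\ell\delta)^2 \text{ a.s.,}$$
and iterating together with $1+x\le e^x$ yields $\EE[M_k^\ell] \le \exp\bigl(\ell/20 + 2Rk\ell^2\delta^2\bigr)$.

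Plugging this back into Markov and using $\log 1.1 - 1/20 > 1/40$ gives
$$\PP[M_k \ge 1.1] \;\le\; \exp\!\left(-\tfrac{\ell}{40} + 2Rk\ell^2\delta^2\right),$$
and it remains to optimize $\ell$ in the interval $[2,\,1/(2\delta)]$. The unconstrained quadratic optimum is $\ell^{\star} = 1/(160Rk\delta^2)$, with minimum value $-1/(12800Rk\delta^2)$. The hypothesis $k \le 1/(320R\delta^2)$ is precisely what guarantees $\ell^{\star} \ge 2$, and the inequality $\ell^{\star} \le 1/(2\delta)$ is equivalent to $k \ge 1/(80R\delta)$. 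Thus in the regime $k \ge 1/(80R\delta)$ (the second branch of the lemma), $\ell^{\star}$ lies in the admissible range and yields the bound $\exp(-1/(12800Rk\delta^2))$. In the opposite regime $k < 1/(80R\delta)$, the quadratic is still decreasing at $\ell = 1/(2\delta)$, so the minimum over the admissible range is attained at that boundary; substituting $\ell = 1/(2\delta)$ produces $\exp(-1/(80\delta) + Rk/2)$, the first branch.

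The argument is really bookkeeping once Lemma~\ref{lem:moments} is in hand. The only place that requires any care is matching the numerical constants $20$, $320$, $80$, $12800$ in the statement to the endpoints of the admissible range $[2,\,1/(2\delta)]$ and the location of the quadratic optimum---in particular verifying that the stated hypothesis $k \le 1/(320R\delta^2)$ is exactly what makes $\ell^{\star} \ge 2$ admissible, and that the case split at $k = 1/(80R\delta)$ is precisely where $\ell^{\star}$ hits the upper endpoint. I expect this numerology to be the fiddliest part of the write-up, though not conceptually difficult.
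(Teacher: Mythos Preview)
Your proposal is correct and follows essentially the same approach as the paper's proof: Markov on the $\ell$th moment, a telescoping application of Lemma~\ref{lem:moments} under conditioning, the bound $1.1^{-\ell}e^{\ell/20}\le e^{-\ell/40}$, and then the same optimization over $\ell\in[2,1/(2\delta)]$ with the same case split. Your write-up even makes the numerology (why $k\le 1/(320R\delta^2)$ forces $\ell^\star\ge 2$, and why the split occurs at $k=1/(80R\delta)$) more explicit than the paper does.
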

\begin{proof}
Let $2 \leq \ell \leq (2\delta)^{-1}$ be a real number to be determined later on. Then, by Lemma~\ref{lem:moments},
\begin{align*}
\EE\left[ \Big(\prod_{i=1}^k Z_i\Big)^\ell \right] &= \EE_{Z_1,\ldots,Z_{k-1}}\left[ \Big(\prod_{i=1}^{k-1} Z_i\Big)^\ell \EE[Z_k^\ell ~|~ Z_1,\ldots,Z_{k-1}]\right] \\
  &\le \left(1+ \frac{\ell}{20k} + 2R (\ell \delta)^2\right) \EE_{Z_1,\ldots,Z_{k-1}}\left[ \Big(\prod_{i=1}^{k-1} Z_i\Big)^\ell\right] \\
  & \le \cdots \le \left(1+ \frac{\ell}{20k} + 2R (\ell \delta)^2\right)^k \le \exp\left(\frac{\ell}{20} + 2R k (\ell \delta)^2 \right).
\end{align*}
Therefore,
$$ \PP \left[ \prod_{i=1}^k Z_i \ge 1.1 \right] \le 1.1^{-\ell} \exp\left(\frac{\ell}{20} + 2R k (\ell \delta)^2 \right)
\le \exp\left(-\frac{\ell}{40} + 2R k (\ell \delta)^2 \right).$$
The minimum of the right hand side over $\ell$ is $\exp(-1/(12800 R k \delta^2))$ and is obtained
for $\ell = 1/(160 R k \delta^2)$. We set $\ell$ to this value, unless it is greater than $1/(2\delta)$, in which
case we set $\ell = 1/(2\delta)$. The lemma follows.
\end{proof}

\begin{corollary}\label{cor:martingalelower}
Let $R,\delta>0$ and let $Z_1,\ldots,Z_k$ be random variables taking values in $(1/2,\infty)$ where $k \le 1 / (1280 R \delta^2)$.
Assume that for all $1 \le i \le k$, conditioning on any values of $Z_1,\ldots,Z_{i-1}$, we almost surely have
\begin{align*}
\EE[Z_i ~|~ Z_1, \ldots, Z_{i-1}] &\ge 1 - \frac{1}{40k}, \\
\PP[|Z_i - 1| \ge t ~|~ Z_1,\ldots,Z_{i-1}] &\le R \exp(-t/\delta) \quad \text{for all} \ t \ge 0.
\end{align*}
Then,
$$
\PP \left[ \prod_{i=1}^k Z_i \le 0.9 \right] \le \left\{%
\begin{array}{ll}
    \exp(-1/(160 \delta)+Rk/2), & k < 1/(160R\delta) \\
    \exp(-1/(51200 R k \delta^2)), & \hbox{otherwise}. \\
\end{array}%
\right.
$$
\end{corollary}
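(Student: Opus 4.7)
The plan is to reduce Corollary~\ref{cor:martingalelower} to Lemma~\ref{lem:martingaleupper} by inversion. Setting $W_i = 1/Z_i$, one has $W_i \in (0,2)$ since $Z_i > 1/2$, and the event $\{\prod_i Z_i \le 0.9\}$ is contained in $\{\prod_i W_i \ge 10/9\} \subseteq \{\prod_i W_i \ge 1.1\}$. So it suffices to verify the hypotheses of Lemma~\ref{lem:martingaleupper} for the $W_i$'s, with some $\delta'$ to be determined.

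For the tail hypothesis I would show that $\Pr[|W_i - 1| \ge t \mid Z_1,\ldots,Z_{i-1}] \le R \exp(-t/(2\delta))$, i.e., that the lemma's tail bound holds with $\delta' = 2\delta$. For $t \ge 1$ this is automatic because $W_i \in (0,2)$. For $t \in (0,1)$, a short case analysis (if $W_i \ge 1+t$ then $Z_i \le 1/(1+t) \le 1 - t/2$; if $W_i \le 1 - t$ then $Z_i \ge 1/(1-t) \ge 1+t$) yields $|Z_i - 1| \ge t/2$, and the given tail bound on $Z_i$ does the rest.

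For the expectation hypothesis I would use the exact identity
$$ \frac{1}{Z_i} \;=\; 1 + (1 - Z_i) + \frac{(1 - Z_i)^2}{Z_i}. $$
Taking conditional expectation and using $1/Z_i < 2$ gives
$$ \EE[W_i \mid Z_1,\ldots,Z_{i-1}] \;\le\; 1 + \EE[\,1 - Z_i \mid \ldots\,] + 2\,\EE[(1-Z_i)^2 \mid \ldots]. $$
The first of the last two terms is at most $1/(40k)$ by hypothesis; the second is at most $4R\delta^2$ by the conditional analogue of the moment bound \eqref{eq:moment} applied with exponent $2$. The hypothesis $k \le 1/(1280 R \delta^2)$ forces $4R\delta^2 \le 1/(40k)$, so the sum is at most $1 + 1/(20k)$, matching \eqref{eq:matingaleupperexp}.

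The remaining step is just to invoke Lemma~\ref{lem:martingaleupper} with parameters $R$ and $\delta' = 2\delta$. The hypothesis $k \le 1/(320 R (\delta')^2) = 1/(1280 R \delta^2)$ is exactly ours, and the two branches of the conclusion become $\exp(-1/(160\delta) + Rk/2)$ and $\exp(-1/(51200 R k \delta^2))$ with threshold $k < 1/(80 R \delta') = 1/(160 R \delta)$, matching the statement verbatim. The one delicate point is the expectation bound: because $x \mapsto 1/x$ is convex, $\EE[W_i]$ is not automatically close to $1$ from the one-sided hypothesis on $\EE[Z_i]$, and it is precisely the second-order term $\EE[(1-Z_i)^2/Z_i]$ that forces the stronger constant $1/(1280 R \delta^2)$ here versus $1/(320 R \delta^2)$ in the upper-tail lemma.
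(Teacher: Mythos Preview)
Your proposal is correct and follows essentially the same route as the paper: invert to $W_i=Z_i^{-1}$, apply Lemma~\ref{lem:martingaleupper} with $\delta'=2\delta$, check the tail bound via the implication $|x^{-1}-1|\ge t\Rightarrow |x-1|\ge t/2$ for $x>1/2$, and control $\EE[W_i]$ through the second-order expansion (your exact identity $1/x=1+(1-x)+(1-x)^2/x$ combined with $1/x<2$ is exactly the inequality $x^{-1}\le 1-(x-1)+2(x-1)^2$ the paper uses). The arithmetic with the constants and the moment bound~\eqref{eq:moment} match the paper's computation line for line.
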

\begin{proof}
We simply apply Lemma~\ref{lem:martingaleupper} to the random variables $Z_1^{-1},\ldots,Z_k^{-1}$
with $R$ and $2\delta$. Eq.~\eqref{eq:matingaleupperpr} holds because for all $t \ge 0$ and $x \ge 1/2$
if $|x^{-1}-1|\ge t$ then also $|x-1|\ge t/2$. For Eq.~\eqref{eq:matingaleupperexp}, we use the inequality
$x^{-1} \le 1-(x-1)+2(x-1)^2$, valid for all $x \ge 1/2$. This implies that
\begin{align*}
\EE[Z_i^{-1} ~|~ Z_1, \ldots, Z_{i-1}] &\le 1 + \frac{1}{40k} + 2 \EE[(Z_i-1)^{2} ~|~ Z_1, \ldots, Z_{i-1}] \\
&\le 1 + \frac{1}{40k} + 4 R \delta^2 \le 1 + \frac{1}{20k},
\end{align*}
where the next-to-last inequality follows from the calculation in~\eqref{eq:moment}.
\end{proof}

\begin{proof}[Proof of Theorem~\ref{thm:mainsamplinglowdim}]
Fix $1 \leq m \leq 9n / 10$ and a set $A \subseteq S^{n-1}$.
Consider a sequence of random subspaces in $\RR^n$,
$$ \RR^n = H_0 \supset H_1 \supset H_2 \supset \cdots \supset H_m $$
in which $\dim(H_i) = n-i$, defined as follows. For each $i \geq 1$, the subspace $H_{i}$ is chosen uniformly in the Grassmannian of all $(n-i)$-dimensional
subspaces of $H_{i-1}$. An important observation, which follows from the uniqueness of the Haar measure,
is that the subspace $H_i$ is distributed uniformly over $\cG_{n, n-i}$, and in particular,
$H_m$ is a uniform $(n-m)$-dimensional subspace.

For $k=1,\ldots,m$ define the random variable
$$ X_k = \frac{\sigma_{H_k}(A \cap H_k)}{\sigma_{H_{k-1}}(A \cap H_{k-1})},$$
where $\sigma_{H_k}$ is the uniform measure on the sphere $S^{n-1} \cap H_k$.
If the denominator vanishes, we set the random variable to $1$.
Notice that
$$ \prod_{k=1}^m X_k = \frac{\sigma_{H_m}(A \cap H_m)}{\sigma(A)} $$
and hence our goal is to show that this product is in $1 \pm 0.1$ except with probability at most $C \exp(-c n^{1/3})$.
We will do this by applying Lemma~\ref{lem:martingaleupper} and Corollary~\ref{cor:martingalelower} to
a regularized version of $X_1,\ldots,X_m$ defined below.

We note three properties of the random variables $X_k$. First, we have that for any $1 \leq k \leq m$, conditioned on any values of $H_1,\ldots,H_{k-1}$,
$$
 \EE \left( X_k | H_1,\ldots,H_{k-1} \right) = 1.
$$
This holds since $H_k$ is distributed uniformly over the Grassmannian of subspaces of $H_{k-1}$.
Second, by definition, $X_k$ is bounded from above by $1/(\sigma_{H_{k-1}}(A \cap H_{k-1}))$.
Finally, by Theorem~\ref{thm:mainequator}, for any $0<t<1$,
\begin{align*}
 \PP(|X_k-1| \ge t ~|~ H_1,\ldots,H_{k-1}) & \le C \exp(-c (n-k+1) t / \log (2/\sigma_{H_{k-1}}(A \cap H_{k-1}))) \\
&\le C \exp(-\tilde{c} n t / \log (2/\sigma_{H_{k-1}}(A \cap H_{k-1}))),
\end{align*}
where we used the fact that $k \le m \le 9n/10$.
Because this tail bound holds only for $t<1$, we cannot apply
Lemma~\ref{lem:martingaleupper} and Corollary~\ref{cor:martingalelower} directly,
and instead proceed below to define the regularized random variables $Z_1,\ldots,Z_m$.

Next, for $0 \le k \le m$, we define the ``bad" event $B_k$ as the event that $X_1 X_2 \cdots X_k \le 1/2$
and for $1 \le k \le m$, the ``bad" event $C_k$ as the event that $|X_k-1| \ge 1/2$.
Condition on any $H_1,\ldots,H_{k-1}$ such that $B_{k-1}$
does not occur. In this case, $\sigma_{H_{k-1}}(A \cap H_{k-1}) \ge \sigma(A)/2$. Hence,
$X_k$ is upper bounded by $2/\sigma(A) \le C \exp(c n^{1/3})$. Moreover, for any $0<t<1$ the probability that $|X_k-1| \ge t$ is at most
$C \exp(-c n t / \log (4/ \sigma(A))) \le C \exp(-\tilde{c} n^{2/3} t)$, and in particular the probability
that $C_k$ occurs is at most $C \exp(-c n^{2/3})$.
For $1 \le k \le m$, we define the random variable $Z_k$ as follows: if either $B_{k-1}$
or $C_k$ occurs, $Z_k$ is $1$. Otherwise, $Z_k = X_k$.

We can now finally apply Lemma~\ref{lem:martingaleupper} and Corollary~\ref{cor:martingalelower}:
for each $1 \le k \le m$, we apply them to the sequence $Z_1,\ldots,Z_k$ with $R=C$ and $\delta = \tilde{C} n^{-2/3}$.
To see why the conditions there hold, condition on any $H_1,\ldots,H_{k-1}$, and assume first that $B_{k-1}$ does not occur. Then
\begin{align*}
|\EE[Z_k | H_1,\ldots,H_{k-1}]-1| &= |\EE[Z_k -X_k| H_1,\ldots,H_{k-1}]| \\
 &\le \PP[C_k | H_1,\ldots,H_{k-1}] \cdot C \exp(c n^{1/3}) \\
 &\le \tilde{C} \exp(-\tilde{c} n^{2/3}).
\end{align*}
Moreover, for \emph{all} non-negative $t$, the probability that $|Z_k-1| \ge t$ is at most
$C \exp(-c n^{2/3} t)$. Finally, these two statements are obviously true even when $B_{k-1}$ does
occur (since in this case $Z_k$ is simply $1$), hence we obtain that the two statements hold
conditioned on any $H_1,\ldots,H_{k-1}$ (and in particular, on any $Z_1,\ldots,Z_{k-1}$).
As a result, the lemma and the corollary imply that for each $1 \le k \le m$,
$|Z_1\cdots Z_k - 1| \ge 0.1$ with probability at most $C \exp(-c n^{1/3})$.
Moreover, by a union bound, the probability that there exists a $k$ for which $|Z_1\cdots Z_k - 1| \ge 0.1$,
an event which we denote by $D$, is at most
\begin{align}
 \PP[D] \le m \cdot C \exp(-c n^{1/3}) \le \tilde{C} \exp(-\tilde{c} n^{1/3}). \label{eq:boundonbadaeventd}
\end{align}

Next, we claim that for any $1 \le k \le m$,
\begin{align}
\PP[\neg D \wedge \neg C_1 \wedge \cdots \wedge \neg C_{k-1} \wedge C_k] \le C \exp(-c n^{2/3}). \label{eq:boundonbadaeventdcc}
\end{align}
To see why, notice that $\neg C_1$ implies that $X_1=Z_1$, which together with $\neg D$ implies
that $\neg B_1$; the latter, in turn,
implies that $X_2=Z_2$ (since neither $B_1$ nor $C_2$ happens), which implies that $B_2$ does
not happen either; etc. As a result, we obtain that $\neg B_{k-1}$, which implies
that the probability of $C_k$ is at most $C \exp(-c n^{2/3})$, as desired.

By summing all the probabilities in~\eqref{eq:boundonbadaeventd} and~\eqref{eq:boundonbadaeventdcc}, we obtain
that
$$
\PP[\neg D \wedge \neg C_1 \wedge \cdots \wedge \neg C_m] \ge 1 - C \exp(-c n^{1/3}).
$$
It remains to notice using the same argument as above that this event implies that for all $k$, $Z_k=X_k$ and therefore also that
$|X_1\cdots X_m - 1| < 0.1$.
\end{proof}

The only thing remaining is to derive Lemma~\ref{lem:main2}
from Theorem~\ref{thm:mainsamplinglowdim}. We restate it here in a slightly more
general form.

\begin{lemma}
Let $1 \leq m \leq 9n / 10$.
Suppose that $A \subseteq S^{n-1}$ and $B \subseteq \cG_{n,n-m}$
are measurable sets with
$$ \sigma(A) \geq C \exp(-c n^{1/3}), \qquad \sigma_{\cG}(B) \geq C \exp(-c n^{1/3}) $$
for some universal constants $c,C>0$. Then,
$$ \sigma_{\cI} \left( (A \times B) \cap \cI_{n,m} \right) \geq 0.8\, \sigma(A) \sigma_{\cG}(B).$$
\end{lemma}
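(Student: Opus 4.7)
The plan is a direct reduction to Theorem~\ref{thm:mainsamplinglowdim}. First, I would decompose the uniform measure on $\cI_{n,m}$ by first sampling $H$ uniformly in $\cG_{n,n-m}$ and then $x$ uniformly in $S^{n-1} \cap H$. This is the correct description of $\sigma_\cI$ because the resulting joint distribution is rotation-invariant on $\cI_{n,m}$, and $\sigma_\cI$ is the unique such probability measure. Under this decomposition,
$$
\sigma_{\cI}\!\left((A \times B) \cap \cI_{n,m}\right) \;=\; \int_B \sigma_H(A \cap H)\, d\sigma_{\cG}(H).
$$

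Next, I would apply Theorem~\ref{thm:mainsamplinglowdim} to the set $A$: since $\sigma(A) \ge C \exp(-c n^{1/3})$, the ``bad'' set
$$
E \;:=\; \left\{ H \in \cG_{n,n-m} \, : \, \sigma_H(A \cap H) < 0.9\, \sigma(A) \right\}
$$
satisfies $\sigma_\cG(E) \le C' \exp(-c' n^{1/3})$ for some other universal constants $c', C' > 0$. Restricting the integral to $B \setminus E$ and using the pointwise lower bound $\sigma_H(A \cap H) \ge 0.9\, \sigma(A)$ valid there yields
$$
\sigma_{\cI}\!\left((A \times B) \cap \cI_{n,m}\right) \;\ge\; 0.9\, \sigma(A) \, \sigma_{\cG}(B \setminus E) \;\ge\; 0.9\, \sigma(A) \left(\sigma_{\cG}(B) - C' \exp(-c' n^{1/3})\right).
$$

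Finally, by choosing the constant $C$ in the hypothesis $\sigma_\cG(B) \ge C \exp(-c n^{1/3})$ large enough (and $c$ small enough) relative to the constants $C', c'$ produced by Theorem~\ref{thm:mainsamplinglowdim}, I can guarantee that $C' \exp(-c' n^{1/3}) \le \tfrac{1}{9}\sigma_\cG(B)$, so that $\sigma_\cG(B \setminus E) \ge \tfrac{8}{9}\sigma_\cG(B)$. Substituting gives
$$
\sigma_{\cI}\!\left((A \times B) \cap \cI_{n,m}\right) \;\ge\; 0.9 \cdot \tfrac{8}{9}\, \sigma(A)\sigma_\cG(B) \;=\; 0.8\, \sigma(A)\sigma_\cG(B),
$$
as claimed. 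There is no real obstacle here: the entire content of the lemma is packaged inside Theorem~\ref{thm:mainsamplinglowdim}, and the only thing to be mildly careful about is to choose the constants $c, C$ in the hypothesis of the lemma so that the exponentially small ``exceptional'' mass produced by the theorem is absorbed by $\sigma_\cG(B)$.
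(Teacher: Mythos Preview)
Your proof is correct and essentially identical to the paper's: both decompose $\sigma_\cI$ as uniform $H \in \cG_{n,n-m}$ followed by uniform $x \in S^{n-1}\cap H$, define the same exceptional set $E$ of subspaces where $\sigma_H(A\cap H) < 0.9\,\sigma(A)$, bound $\sigma_\cG(E)$ via Theorem~\ref{thm:mainsamplinglowdim}, and absorb the exceptional mass using the lower bound on $\sigma_\cG(B)$. The only cosmetic difference is that the paper phrases the final inequality as $(1-\sigma_\cG(E)/\sigma_\cG(B))\cdot 0.9\,\sigma(A)$ whereas you write it as $0.9\,\sigma(A)\,\sigma_\cG(B\setminus E)$.
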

\begin{proof}
Notice that $ \sigma_{\cI} ( (A \times B) \cap \cI_{n,m} ) / \sigma_{\cG}(B)$ may
be interpreted as the probability that when choosing a subspace $H$ uniformly from $B$
and a uniform vector $x$ in $H \cap S^{n-1}$, we have $x \in A$. To analyze this probability,
denote by $E \subseteq \cG_{n,n-m}$
the set of all $(n-m)$-dimensional subspaces $H$ for which
$$ \frac{\sigma_H(A \cap H)}{\sigma(A)} \le 0.9.$$
Then, by Theorem~\ref{thm:mainsamplinglowdim}, $\sigma_{\cG}(E) \leq C \exp (-c n^{1/3})$.
Next, observe that the probability that $H \in E$ is at most $\sigma_{\cG}(E) / \sigma_{\cG}(B)$.
Moreover, if $H \notin E$, then by definition, the probability that $x \in A$ is at least $0.9 \, \sigma(A)$.
Hence,
$$ \frac{\sigma_{\cI} ( (A \times B) \cap \cI_{n,m} )}{\sigma_{\cG}(B)} \ge
   \left( 1 - \frac{\sigma_{\cG}(E)}{\sigma_{\cG}(B)} \right) 0.9 \, \sigma(A) > 0.8 \, \sigma(A),$$
assuming the universal constants are chosen properly.
\end{proof}

\subsection*{Acknowledgments}

We thank Ronald de Wolf for comments on an earlier draft.

\bibliographystyle{alphaabbrvprelim}
\bibliography{communication}

\end{document}